\documentclass[11pt]{article}
\usepackage[utf8]{inputenc}
\usepackage{fullpage}
\usepackage{svg}
\usepackage{bm}

\usepackage{caption}

\usepackage{booktabs}

\usepackage{algorithm}
\usepackage{algpseudocode}

\usepackage{wrapfig}

\usepackage{amsmath, amssymb, amsthm}
\usepackage[hidelinks]{hyperref}
\usepackage{thmtools}
\usepackage{nameref,cleveref}
\usepackage{color}
\usepackage[shortlabels]{enumitem}

\usepackage{svg}

\def\Comments{1}

\ifnum\Comments>0
\newcommand{\pasin}[1]{\textcolor{red}{{\bf (Pasin:} {#1}{\bf ) }} \marginpar{\tiny\bf
             \begin{minipage}[t]{0.5in}
               \raggedright S:
            \end{minipage}}}

\newcommand{\charlie}[1]{\textcolor{red}{{\bf (Charlie:} {#1}{\bf ) }} \marginpar{\tiny\bf
             \begin{minipage}[t]{0.5in}
               \raggedright S:
            \end{minipage}}}
\else
\newcommand{\pasin}[1]{}
\newcommand{\charlie}[1]{}
\fi

\newtheorem{thm}{Theorem}
\newtheorem{proposition}[thm]{Proposition}

\newtheorem{lemma}[thm]{Lemma}
\newtheorem{theorem}[thm]{Theorem}

\newtheorem{definition}[thm]{Definition}

\newtheorem{remark}[thm]{Remark}
\crefname{definition}{definition}{definitions}
\Crefname{definition}{Definition}{Definitions}

\newcommand{\eps}{\varepsilon}

\newcommand{\R}{\mathbb{R}}
\newcommand{\Z}{\mathbb{Z}}
\newcommand{\N}{\mathbb{N}}
\newcommand{\bbP}{\mathbb{P}}
\newcommand{\E}{\mathbb{E}}

\newcommand{\dr}[2]{\mathrm{D}_{\alpha}\left(#1\ \middle\|\ #2\right)}
\newcommand{\drp}[2]{\mathrm{D}_{\alpha}^{\delta}\left(#1\ \middle\|\ #2\right)}
\newcommand{\dra}[3]{\mathrm{D}_{#3}\left(#1\ \middle\|\ #2\right)}
\newcommand{\drpd}[3]{\mathrm{D}_{\alpha}^{#1}\left(#2\ \middle\|\ #3\right)}

\newcommand{\tP}{\tilde{P}}
\newcommand{\tQ}{\tilde{Q}}
\newcommand{\teps}{\tilde{\eps}}
\newcommand{\hs}[3]{\mathrm{D}^{\mathrm{hs}}_{e^{#3}}\left(#1\ \middle\|\ #2\right)}
\newcommand{\Brac}[1]{\left[#1\right]}
\newcommand{\Paren}[1]{\left(#1\right)}
\newcommand{\disc}{\Delta^{\mathrm{disc}}}
\newcommand{\ndisc}{N_{\mathrm{disc}}}
\newcommand{\discprim}{\psi}

\allowdisplaybreaks

\DeclareMathOperator{\Ber}{Ber}

\newcommand{\piopt}{\pi^*}

\title{Optimal partition selection with R{\'e}nyi differential privacy}

\author{Charlie Harrison \\Google\\ \texttt{\small csharrison@google.com}
\and
Pasin Manurangsi \\Google Research\\ \texttt{\small pasin@google.com}
} 

\begin{document}

\maketitle

\begin{abstract}
A common problem in private data analysis is the partition selection problem, where each user holds a set of partitions (e.g. keys in a GROUP BY operation) from a possibly unbounded set. The challenge here is in maximizing the set of released partitions while respecting a differential privacy constraint. Previous work \cite{desfontaines-partition} presented an optimal $(\eps, \delta)$-DP algorithm when each user submits only a single partition. We generalize this approach to find the optimal algorithm under $\delta$-approximate $(\alpha, \eps)$-R{\'e}nyi differential privacy (RDP), which allows much tighter analysis under composition. Motivated by the \emph{non-existence} of a general optimality result in the case where users submit multiple partitions each, we present an extension of our optimal algorithm tuned for $L^2$ bounded weighted partition selection which can be used as a drop-in improvement over the Gaussian mechanism any time the partition frequency is not also needed. We show that our primitive can be easily plugged into state of the art partition selection algorithms (PolicyGaussian from \cite{gopi20a} and MAD2R from \cite{chen2025scalable}), improving performance both for parallel and sequential adaptive algorithms.
Finally, we show that there is an inherent cost to algorithms which \emph{do} support releasing the frequency as well as the partitions. Specifically, we formulate a basic notion of optimal approximate RDP algorithm for partition selection using additive noise, and show that there is a numerical separation between additive and non-additive noise mechanisms for this problem.
\end{abstract}

\section{Introduction}

Differential privacy \cite{dworkcalibrating} is a strong notion of privacy which bounds the information a worst-case attacker can learn about the output of a privacy mechanism. Since its inception, various relaxations of differential privacy have gained in popularity. In particular, R{\'e}nyi differential privacy (RDP) \cite{Mironov17} and approximate RDP \cite{papernot2022hyperparameter} are most relevant to this work. These relaxations typically provide much better utility in scenarios where many mechanisms are composed together.

The problem of identifying the set of ``keys" in a dataset is common in private data analysis. This comes up for instance, when determining the set of output partitions in a private GROUP BY query, or when trying to release a dataset of high dimensional items like arbitrary strings or URLs. In many cases, the set of possible outputs is exponential or even infinite, which precludes (non-trivial) mechanisms which introduce false positives. Therefore, we limit our approach to cases where the mechanism must emit only a subset of the true partitions.

\subsection{Related work}

Some of the earliest work in private partition selection comes from \cite{Korolova-search-queries}, who analyzed search logs to privately release the set of user queries. Their approach leveraged adding additive Laplace noise to the count of each query, and only keeping queries whose count exceeded some threshold. This basic recipe (sometimes replacing Laplace noise with Gaussian noise) became a popular approach to partition selection, and has been proposed in various private query engines \cite{wilson2020differentially, plume2022}.

In the case where a user only contributes to a single partition, \cite{desfontaines-partition} proved the \emph{optimal} algorithm under $(\eps, \delta)$-DP. That is, each partition is released with the maximum possible probability under the privacy constraint. Our work generalizes this result, as taking the limit $\alpha \to \infty$ under $\delta$-approximate $(\alpha, \eps)$-RDP recovers $(\eps, \delta)$-DP exactly. By allowing a relaxation to finite $\alpha$, we can show utility improvements when under composition (or, equivalently, to handle users contributing multiple partitions).

\cite{gopi20a} formalized the partition selection problem as ``differentially private set union", and innovated on algorithms to handle cases where users contribute more than one partitions. Their proposed algorithm is adaptive and greedy, where each user's contribution is dependent on previous users'. \cite{Carvalho2022IncorporatingIF} follow a similar recipe, proceeding user-by-user in a dependent manner. While these sequential algorithms typically allow releasing the most partitions for the same privacy budget, they are difficult to scale for large datasets or query systems. \cite{swanberg2023dp, chen2025scalable} both attempt to design adaptive algorithms that can be implemented in a highly parallelizable fashion.

Finally, our work involves numerically computing optimal additive noise mechanisms. This overall approach was used in \cite{gilani2025optimal} to find variance-optimal mechanisms under an RDP constraint, with the goal of improving an overall privacy guarantee under composition. Unlike their approach which attempted to minimize variance, we minimize \emph{tail bounds}, enabling us to choose a tighter truncation threshold and therefore release more partitions.

\subsection{Our contributions}

\paragraph{Optimal partition selection (\Cref{sec:main-result}).} In \Cref{thm:opt-partition-prim}, we present an optimal partition selection algorithm (under approximate RDP) when users only submit a single partition. Our algorithm exactly recovers the result from \cite{desfontaines-partition} in the limit when $\alpha \to \infty$, but for finite $\alpha$ we can leverage the tight composition afforded by RDP. We additionally show that when users can submit more than one partition, there is no single optimal mechanism (\Cref{thm:no-optimal-multi}).

\paragraph{Weighted partition selection with the SNAPS mechanism (\Cref{sec:weighted-selection}).} We extend our core algorithm to handle real valued weights associated with partitions. Under this framework, we derive the SNAPS algorithm (Smooth Norm-Aware Partition Selection) which satisfy the privacy constraints when the users hold vectors with sensitivities bounded by arbitrary $L^r$ norms. In particular, when $r = 2$, we can plug our algorithm into more complex adaptive mechanisms like DP-SIPS \cite{swanberg2023dp}, MAD2R \cite{chen2025scalable}, or PolicyGaussian \cite{gopi20a} which use the Gaussian mechanism as a subroutine for partition selection.

\paragraph{Numerical experiments (\Cref{sec:exp}).} 
We replace the Gaussian mechanism with the SNAPS mechanism described above in two weighted partition selection algorithms: MAD2R \cite{chen2025scalable} and PolicyGaussian \cite{gopi20a}. Adopting our subroutine improves performance of both of these approaches, yielding state-of-the-art approaches for partition selection in both the parallel and sequential regimes.

\paragraph{Optimal partition selection with additive noise (\Cref{sec:additive-partition-selection}).} Motivated by the fact that \cite{desfontaines-partition} nearly matches utility with the truncated discrete Laplace mechanism, we present a convex program to find the optimal partition selection mechanism when a) users only submit a single partition, and b) when the algorithm must be the post-processing of an additive noise mechanism. We find a numerical separation in privacy between additive and non-additive approaches to partition selection. Since one benefit of using additive noise is that we privatize the count as well as the partition itself simultaneously, this result can be seen as the ``cost of releasing the count" associated with any partition. When the count is not needed, our result demonstrates that additive-noise based techniques are (unfortunately) sub-optimal.
\section{Preliminaries} \label{sec:prelim}

For every $K \in \N$, we use $[K]$ to denote the set $\{1, \dots, K\}$. For a given distribution $P$, we will write $P_+$ to denote the shift of $P$ by one. That is, $P(x) = P_+(x-1)$.

Let $\Ber(p)$ denote the Bernoulli distribution. For $X \sim \Ber(p)$, $\bbP(X = 1) = p = 1 - \bbP(X = 0)$. 



\subsection{Privacy tools}

We will begin by defining the relevant privacy notions used in this work. Below, we define differential privacy (DP) under a generic neighboring notion. Specific neighboring notions relevant to our work will be introduced when we formalize the problem in \Cref{subsec:prelim-partition-selection}.

\begin{definition}[Differential privacy \cite{dwork-calibrating}] \label{def:dp}
A randomized algorithm $M: \mathcal{X} \to \mathcal{Y}$ satisfies \emph{$(\eps, \delta)$-differential privacy ($(\eps, \delta)$-DP)} if, for all neighboring inputs $X, X' \in \mathcal{X}$
, and for all $S \subseteq \mathcal{Y}$, $
\mathbb{P}(M(X) \in S) \le e^\eps \cdot \mathbb{P}(M(X') \in S) + \delta
$.
\end{definition}

As mentioned earlier, we are interested in variants of DP that uses (approximate) R{\'e}nyi divergence to measure the divergence between $M(X)$ and $M(X')$. We start by recalling R{\'e}nyi divergence and its approximate counterpart below.

\begin{definition}[R{\'e}nyi divergence \cite{renyi61}]
Let $P$ and $Q$ be probability distributions on $\Omega$. Then the R{\'e}nyi divergence between $P$ and $Q$ at order $\alpha$, denoted $\dr{P}{Q}$, is 

$$
\dr{P}{Q} = \frac{1}{\alpha-1} \log\left(\int_\Omega P(x)^\alpha Q(x)^{1-\alpha} \mathrm{d}x\right),
$$

where $P(\cdot), Q(\cdot)$ denote the probability mass/density functions of $P$ and $Q$, respectively.\footnote{If $P$ is not absolutely continuous with respect to $Q$, we define $\dr{P}{Q} = \infty$ for all $\alpha \ge 1$.} The R{\'e}nyi divergence at $\alpha = 1$ is defined as 

$$
\dra{P}{Q}{1} = \lim_{\alpha \to 1} \dr{P}{Q} = D_{KL}(P || Q) = \int_{\Omega} P(x) \log \left(\frac{P(x)}{Q(x)} \right)\mathrm{d}x.
$$
\end{definition}

\begin{definition}[Approximate R{\'e}nyi divergence \cite{papernot2022hyperparameter}]\label{def:approx-renyi-divergence}
Let $P$ and $Q$ be probability distributions over $\Omega$ and $\delta \in [0, 1]$, then
\begin{align*}
\drp{P}{Q} = \inf\left\{\dr{P'}{Q'} : P = (1-\delta)P' + \delta P'', Q=(1-\delta)Q' + \delta Q''\right\}
\end{align*}

i.e. $P$ and $Q$ can both be expressed as a convex combination of two distributions with weights $1-\delta$ and $\delta$, respectively.

\end{definition}

For notational convenience, we will sometimes write random variables in place of their distributions in $\dr{\cdot}{\cdot}$ when there is no ambiguity.

The privacy notion of our interest can now be defined as follows:

\begin{definition}[Approximate R{\'e}nyi DP \cite{papernot2022hyperparameter}] \label{def:approx-rdp}
A randomized algorithm $M: \mathcal{X} \to \mathcal{Y}$ satisfies \emph{$\delta$-approximate $(\alpha, \hat{\eps})$-R{\'e}nyi differential privacy} (denoted $(\delta, \alpha, \eps)$-RDP for brevity) if, for all neighboring inputs $X, X' \in \mathcal{X}$
, $\drp{M(x)}{M(x')} \le \hat{\eps}$.
\end{definition}

Standard R{\'e}nyi differential privacy \cite{Mironov17} is a special case of approximate RDP with $\delta = 0$, which it shares the following straightforward composition properties with.

\begin{proposition}[Approximate RDP composition] \label{prop:apx-rdp-composition}
Let $M_1$ and $M_2$ satisfy $(\delta_1, \alpha, \eps_1)$ and $(\delta_2, \alpha, \eps_2)$ RDP, respectively. Then $M_1 \circ M_2$ satisfies $(\delta, \alpha, \eps_1 + \eps_2)$ RDP, where $\delta = \delta_1 + \delta_2 - \delta_1 \cdot \delta_2 \le \delta_1 + \delta_2.$

In particular, if we have distributions $P_1, P_2, Q_1, Q_2$ such that $\drpd{\delta_1}{P_1}{Q_1} \leq \eps_1$ and $\drpd{\delta_2}{P_2}{Q_2} \leq \eps_2$, then $\drpd{\delta_1 + \delta_2}{P_1 \otimes P_2}{Q_1 \otimes Q_2} \leq \eps_1 + \eps_2$.
\end{proposition}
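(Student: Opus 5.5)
We prove the distributional statement first and then read off the mechanism statement. Suppose $\drpd{\delta_1}{P_1}{Q_1} \le \eps_1$ and $\drpd{\delta_2}{P_2}{Q_2} \le \eps_2$. By \Cref{def:approx-renyi-divergence}, for any slack $\eta>0$ there are decompositions $P_i = (1-\delta_i)P_i' + \delta_i P_i''$ and $Q_i = (1-\delta_i)Q_i' + \delta_i Q_i''$ with $\dr{P_i'}{Q_i'} \le \eps_i + \eta$.

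Take the tensor product and expand:
\[
P_1\otimes P_2 = (1-\delta_1)(1-\delta_2)\, P_1'\otimes P_2' + \bigl[\delta_1(1-\delta_2) P_1''\otimes P_2' + (1-\delta_1)\delta_2 P_1'\otimes P_2'' + \delta_1\delta_2 P_1''\otimes P_2''\bigr].
\]
The bracketed part has total mass $1-(1-\delta_1)(1-\delta_2) = \delta$, where $\delta=\delta_1+\delta_2-\delta_1\delta_2$. Normalizing it gives a distribution $P''$, so $P_1\otimes P_2 = (1-\delta)P' + \delta P''$ with $P' = P_1'\otimes P_2'$. The same expansion for $Q$ gives $Q' = Q_1'\otimes Q_2'$ with the same weights. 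The R\'enyi integral factorizes over product measures, since $\int (p_1p_2)^\alpha (q_1q_2)^{1-\alpha} = \prod_i \int p_i^\alpha q_i^{1-\alpha}$. Hence $\dr{P'}{Q'} = \dr{P_1'}{Q_1'} + \dr{P_2'}{Q_2'} \le \eps_1+\eps_2+2\eta$. Letting $\eta\to 0$ gives $\drpd{\delta}{P_1\otimes P_2}{Q_1\otimes Q_2}\le \eps_1+\eps_2$.

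It remains to replace $\delta$ by $\delta_1+\delta_2$. The approximate divergence is monotone non-increasing in $\delta$. Given a decomposition at level $\delta$ and any $\delta^*\in[\delta,1]$, write $P = (1-\delta^*)P' + \delta^*\tilde P''$, where $\tilde P''$ is the normalized mixture of $(\delta^*-\delta)P'$ and $\delta P''$. This keeps the same $P'$ and $Q'$. The edge case $\delta_1+\delta_2>1$ is vacuous and can be handled by convention or by capping at $1$.

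For mechanisms, the composition is adaptive: $M_2$ may take the output $y_1$ of $M_1$ as input. For neighbors $X,X'$, $M_1$ gives the decomposition above for the first coordinate, and for each fixed $y_1$ the conditionals $M_2(X,y_1)$ and $M_2(X',y_1)$ admit a decomposition with parameter $\delta_2$. The joint distributions are then mixtures over $y_1$ of products, so we form the good part $\int P_1'(y_1)\,P_{2,y_1}'\,dy_1$ at weight $(1-\delta_1)(1-\delta_2)$, and likewise for $Q$.

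This adaptive step is the main obstacle. The factorization is no longer a plain product, so we use the chain rule: integrate the conditional R\'enyi integral, which is at most $e^{(\alpha-1)(\eps_2+\eta)}$ uniformly in $y_1$. This bounds the total by $e^{(\alpha-1)(\eps_1+\eps_2+2\eta)}$ (assuming $\alpha>1$), which is the same bound as in the product case. Making the choice of $y_1 \mapsto (P_{2,y_1}',Q_{2,y_1}')$ measurable is a technicality. For discrete outputs, which is the setting of this paper, it is immediate, and in general one can take near-optimal choices.
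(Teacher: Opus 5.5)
Your proof is correct. The paper gives no proof of this proposition: it simply asserts it as a straightforward property that approximate RDP shares with standard RDP, following Papernot and Steinke. Your argument is the standard one that fills this in. You tensor the two mixture decompositions so the good parts carry weight $(1-\delta_1)(1-\delta_2)=1-\delta$, factorize the R\'enyi integral (or chain it, in the adaptive case), and pass from $\delta$ to $\delta_1+\delta_2$ by monotonicity in $\delta$.

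Your adaptive case goes beyond what the paper uses, since the SNAPS theorem only needs the product form. One notational fix there: write the good part as the joint density $P_1'(y_1)\,P'_{2,y_1}(y_2)$ on pairs $(y_1,y_2)$, not as $\int P_1'(y_1)P'_{2,y_1}\,dy_1$, which literally integrates $y_1$ away. Your chain-rule computation is clearly done on the joint, so this does not affect correctness.
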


Converting from approximate RDP to approximate DP is a simple extension of \cite{canonne2020discrete} to add the extra $\delta$ term.

\begin{proposition}\label{prop:approx-rdp-to-dp}
Let $M$ satisfy $(\delta, \alpha, \eps)$-RDP. Then $M$ satisfies $(\hat{\delta}, \hat{\eps})$-DP for 
$$
\hat{\delta} = \delta + \frac{\exp((\alpha - 1)(\eps - \hat{\eps}))}{\alpha}\cdot \left(1 - \frac{1}{\alpha}\right)^{\alpha-1}.
$$
\end{proposition}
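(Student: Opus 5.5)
The plan is to reduce to the pure (non-approximate) RDP-to-DP conversion of \cite{canonne2020discrete} and then add back the mass $\delta$ that the approximate divergence removes.

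Fix neighboring inputs $X, X'$ and write $P = M(X)$ and $Q = M(X')$. By \Cref{def:approx-renyi-divergence} and the hypothesis $\drp{P}{Q} \le \eps$, we can decompose
\begin{align*}
P &= (1-\delta)P' + \delta P'', \\
Q &= (1-\delta)Q' + \delta Q'',
\end{align*}
with $\dr{P'}{Q'} \le \eps$. If the infimum is not attained, use $\eps + \eta$ and let $\eta \to 0$ at the end; the bound is continuous in $\eps$.

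Next, apply the conversion of \cite{canonne2020discrete} to the pair $(P', Q')$. That result states that $\dr{P'}{Q'} \le \eps$ implies, for every event $S$,
\[
P'(S) \le e^{\hat\eps} Q'(S) + \delta', \qquad \delta' = \frac{\exp((\alpha-1)(\eps-\hat\eps))}{\alpha}\left(1-\frac1\alpha\right)^{\alpha-1}.
\]
If I wanted to reprove this step, I would write $P'(S) - e^{\hat\eps}Q'(S) \le \E_{Q'}[(L - e^{\hat\eps})_+]$ with $L = P'/Q'$. I would then bound $(L - e^{\hat\eps})_+ \le c\, L^\alpha$ pointwise, choosing the optimal constant $c = \sup_{t \ge 0} (t - e^{\hat\eps})_+ / t^\alpha$. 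This supremum is attained at $t = \alpha e^{\hat\eps}/(\alpha-1)$. Finally I would use $\E_{Q'}[L^\alpha] \le e^{(\alpha-1)\eps}$, which produces exactly $\delta'$. This is the only computational step, and it is standard.

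Finally, combine. For any event $S$,
\begin{align*}
P(S) &\le (1-\delta)P'(S) + \delta \\
&\le (1-\delta)\left(e^{\hat\eps}Q'(S) + \delta'\right) + \delta \\
&\le e^{\hat\eps}(1-\delta)Q'(S) + \delta' + \delta \\
&\le e^{\hat\eps}Q(S) + \delta + \delta'.
\end{align*}
The last step uses $(1-\delta)Q'(S) \le Q(S)$, which holds because $\delta Q''(S) \ge 0$. The resulting additive term is $\delta + \delta'$, which is $\hat\delta$. Since $X, X'$ were arbitrary neighbors, $M$ is $(\hat\eps, \hat\delta)$-DP.

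The main care point is that the conversion is applied to $P', Q'$ rather than to $P, Q$, and that both the $P''$ and $Q''$ components are handled in the correct direction: the $P''$ mass is charged to $\delta$, and the $Q''$ mass is simply dropped as a nonnegative term. Nothing else is delicate.
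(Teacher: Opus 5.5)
Your proof is correct and follows the same route the paper indicates: the paper gives no detailed proof, only the remark that the result is a simple extension of \cite{canonne2020discrete} obtained by adding the extra $\delta$ term. Your argument carries out exactly that extension, decomposing $P$ and $Q$, applying the pure-RDP conversion to $(P', Q')$, charging the $P''$ mass to $\delta$ and dropping the $Q''$ mass. Your optimized pointwise constant reproduces the stated $\hat{\delta}$.
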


\subsection{Partition selection}
\label{subsec:prelim-partition-selection}

In this section we will formalize the problem of partition selection.
Let $U$ be a universe of elements. A dataset $X$ is represented as a vector in $\Z_{\geq 0}^U$, where $X_u$ is the number of occurrences of item $u$. 

\begin{definition}[Private partition selection]
A partition selection mechanism $M: \Z_{\geq 0}^U \to 2^U$ takes as input a database and returns as many partitions as possible in the database (under the privacy constraint). In other words, the output $M(X)$ must be a subset of $\{u \in U \mid X_u > 0 \}$.
\end{definition}

We say that two datasets $X, X'$ are neighbors under $L^r$ norm bound $\Delta$ if $\|X - X'\|_r \leq \Delta$. 

We focus on differentially private algorithms based on applying a \emph{partition selection primitive} independently on each element, as formalized below.

\begin{definition}[Differentially private partition selection primitive]\label{def:partition-selection-prim}
A partition selection primitive is a function $\pi : \mathbb{Z}_{\ge 0} \to [0, 1]$ where $\pi(0)= 0$. Its corresponding partition selection mechanism $M_{\pi}$ works as follows: For every $u \in U$, include $u$ independently in the output with probability $\pi(X_u)$.

We say that $\pi$ is $(\delta, \alpha, \eps)$-RDP for $\Delta_r = \Delta$ if $M_{\pi}$ is $\delta$-approximate $(\alpha, \eps)$-RDP under the neighboring notion for $L^r$ norm bound $\Delta$.
\end{definition}

We similarly define \emph{weighted partition selection} and \emph{weighted partition selection primitive} (denoted by a release probability function $\phi: \R_{\geq 0} \to [0, 1]$) identically as above, except that each dataset is now represented as a vector of non-negative \emph{real numbers}, i.e. $X \in \R^U_{\geq 0}$. Here $X_u$ denote the total weight of item $u$. 
Moreover, we also sometimes allow multiple norm bounds.


\begin{remark}
For notational convenience, our ``dataset'' above already consists of total counts or weights of elements. However, this can also be a result of a function (e.g. policy from \cite{gopi20a}) of an underlying database. As long as this function results has bounded $L_r$-sensitivity, we can apply our mechanisms.
\end{remark}

Finally, we outline what we mean by \emph{optimality} of partition selection. 

\begin{definition}\label{def:prim-opt}
A partition selection primitive $\piopt$ is optimal for $(\delta, \alpha, \eps)$-RDP with $\Delta_r = \Delta$ if it satisfies $(\delta, \alpha, \eps)$-RDP for $\Delta_r = \Delta$ and, additionally, for any other partition selection primitive $\pi$ satisfying the same privacy constraint and all $n \in \mathbb{N}$, $\piopt(n) \ge \pi(n)$.
\end{definition}

\begin{definition}
    A partition selection mechanism $M$ is optimal for $(\delta, \alpha, \eps)$-RDP with $\Delta_r = \Delta$ if it satisfies $(\delta, \alpha, \eps)$-RDP for $\Delta_r = \Delta$ and additionally, for any other partition selection mechanism $M'$ satisfying the privacy constraint: 
    $E[|M(X)|] \ge E[|M'(X)|]$ for all datasets $X$.
\end{definition}

\section{Optimal partition selection}\label{sec:main-result}

Like \cite{desfontaines-partition}, we will start by focusing on the unweighted case where each user only contributes a single element to a single partition, i.e.\footnote{Since this is the unweighted case, we can also set $\Delta_r = 1$ for any $1 < r < \infty$ (or $\Delta_0 = \Delta_\infty = 1$) instead.} $\Delta_1 = 1$. 
In this case, we will describe a simple formula for the optimal partition selection primitive $\piopt$ for $(\delta, \alpha,\eps)$-RDP.

Before do so, it will be useful to describe how the approximate R{\'e}nyi divergence between two Bernoulli random variables behaves.

\begin{lemma}\label{bernoulli-monotonic}
    Let $P \sim \Ber(p), Q \sim \Ber(q)$ and $\alpha > 1$. Fixing $p$, $\dr{P}{Q}$ is decreasing in $q$ on the interval $[0, p]$ and increasing in $q$ on the interval $[p, 1]$. Similarly, for a fixed $q$, $\dr{P}{Q}$ is decreasing in $p$ in $[0, q]$ and increasing in $p$ on $[q, 1]$.
\end{lemma}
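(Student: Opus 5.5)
Since $\alpha > 1$ and $t \mapsto \frac{1}{\alpha-1}\log t$ is increasing, I will work directly with
$$
F(p,q) = p^\alpha q^{1-\alpha} + (1-p)^\alpha (1-q)^{1-\alpha},
$$
so that $\dr{P}{Q} = \frac{1}{\alpha-1}\log F(p,q)$. Every monotonicity claim about $\dr{P}{Q}$ is then the same claim about $F$. The endpoint cases $q \in \{0,1\}$ (where the divergence may be $+\infty$) and $p \in \{0,1\}$ will be handled at the end by continuity, or by the convention in the footnote. I will first prove both statements on the open interval $(0,1)$ by a sign analysis of the partial derivatives.

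\emph{Fixed $p$, varying $q$.} Differentiating,
$$
\partial_q F = (1-\alpha)\left[\left(\tfrac{p}{q}\right)^\alpha - \left(\tfrac{1-p}{1-q}\right)^\alpha\right].
$$
Since $1-\alpha<0$ and $x\mapsto x^\alpha$ is increasing on $[0,\infty)$, the sign of $\partial_q F$ is the opposite of the sign of $\frac{p}{q} - \frac{1-p}{1-q}$. That difference is positive exactly when $p(1-q) > q(1-p)$, i.e.\ when $p>q$. So $\partial_q F<0$ for $q<p$ and $\partial_q F>0$ for $q>p$. This means $F$, and hence $\dr{P}{Q}$, is decreasing on $[0,p]$ and increasing on $[p,1]$, with minimum value $0$ at $q=p$.

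\emph{Fixed $q$, varying $p$.} Similarly,
$$
\partial_p F = \alpha\left[\left(\tfrac{p}{q}\right)^{\alpha-1} - \left(\tfrac{1-p}{1-q}\right)^{\alpha-1}\right].
$$
Since $\alpha-1>0$, the map $x\mapsto x^{\alpha-1}$ is increasing. The sign of $\partial_p F$ therefore equals the sign of $\frac{p}{q}-\frac{1-p}{1-q}$, which is again the sign of $p-q$. So $F$ decreases in $p$ on $[0,q]$ and increases on $[q,1]$. Alternatively, this half follows from convexity of $F$ in $p$: $F$ is a sum of convex functions $p^\alpha$ and $(1-p)^\alpha$, so it is convex in $p$ with its minimum at $p=q$.

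The only mildly delicate step is the boundary behavior. When $q\to 0$ with $p>0$, the term $p^\alpha q^{1-\alpha}\to\infty$, which is consistent with "decreasing on $[0,p]$" under the footnote convention that the divergence is infinite when $P \not\ll Q$; the case $q\to 1$ is symmetric. At $p \in \{0,1\}$ the function $F$ is continuous in $p$, so monotonicity on the open interval extends to the closed one. I do not expect a real obstacle: the core of the proof is the sign comparison between $p/q$ and $(1-p)/(1-q)$, which is exactly the comparison between $p$ and $q$.
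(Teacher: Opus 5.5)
Your argument is correct, but it takes a different route from the paper. The paper's proof is a one-liner. $\mathrm{D}_\alpha$ is jointly quasi-convex in $(P,Q)$ for $\alpha \ge 0$ (van Erven--Harremo\"es), so its restriction to either argument is a quasi-convex function on $[0,1]$. Since it attains its minimum value $0$ at $p=q$, it must be non-increasing before that point and non-decreasing after it. You instead differentiate $F = e^{(\alpha-1)\mathrm{D}_\alpha}$ and reduce both sign questions to comparing $p/q$ with $(1-p)/(1-q)$, i.e.\ $p$ with $q$.

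Your version is self-contained and gives \emph{strict} monotonicity on the interior. For example, this means the level sets that Algorithm~\ref{alg:piopt} solves for by bisection contain at most one point. The price is that you must handle the boundary and the value $+\infty$ by hand. The quasi-convexity route treats extended-real values and both arguments uniformly, but it only gives weak monotonicity and rests on a cited theorem. The one boundary case you leave implicit, $q\in\{0,1\}$ with $p$ varying, is immediate: there $F$ equals $1$ at $p=q$ and $+\infty$ elsewhere.

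Your convexity remark for $p$ also extends to $q$. $F(p,q)$ is a sum of perspective functions $u^\alpha v^{1-\alpha}$ evaluated at affine arguments, hence jointly convex; this is exactly \Cref{lem:renyi-convex} later in the paper. Convexity together with $F \ge 1 = F(p,p)$ then gives both halves at once. That is essentially the paper's argument, with convexity of $e^{(\alpha-1)\mathrm{D}_\alpha}$ in place of quasi-convexity of $\mathrm{D}_\alpha$.
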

\begin{proof}
Clearly, $\dr{P}{Q}$ is minimized at 0 when $p = q$. The result follows from the joint quasi-convexity of R{\'e}nyi divergence for $\alpha \ge 0$ (see \cite{van2014renyi}).
\end{proof}

\begin{lemma}\label{bernoulli-divergence}
Let $P \sim \Ber(p), Q \sim \Ber(q)$, and $\delta \in [0, 1)$.
Then 
$$
\drp{P}{Q} =
\begin{cases}
\dr{\Ber\left(\frac{p}{1-\delta}\right)}{\Ber\left(\frac{q - \delta}{1-\delta}\right)} & p < q - \delta\\
\dr{\Ber\left(\frac{p - \delta}{1- \delta}\right)}{\Ber\left(\frac{q}{1-\delta}\right)} & p > q + \delta\\
0 & |p - q| \le \delta.
\end{cases}
$$

Furthermore, fixing $p$, $\drp{P}{Q}$ is decreasing in $q$ on the interval $[0, p]$ and increasing in $q$ on the interval $[p, 1]$. Similarly for a fixed $q$, $\drp{P}{Q}$ is decreasing in $p$ on $[0, q]$ and increasing in $p$ on $[q,1]$.
\end{lemma}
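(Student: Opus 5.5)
We need to establish the formula for $\drp{P}{Q}$ with $P=\Ber(p)$, $Q=\Ber(q)$. By symmetry (swap outcome labels $0\leftrightarrow 1$, which maps $p\mapsto 1-p$, $q\mapsto 1-q$ and leaves the divergence unchanged), it suffices to treat the case $p \ge q$ and then read off the other case. First, the case $|p-q|\le\delta$ gives value $0$: choose a common $P'=Q'$. Concretely, with $r=\min(p,q)/(1-\delta)$ truncated appropriately (any $r$ with $\max(p-\delta,q-\delta)\le (1-\delta)r\le \min(p,q)$, which is nonempty exactly when $|p-q|\le\delta$), we set $P'=Q'=\Ber(r)$. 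Then $P''=\Ber((p-(1-\delta)r)/\delta)$ is a valid Bernoulli, and similarly for $Q''$, so the infimum is $0$.

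For $p>q+\delta$, the key observation is that any decomposition $P=(1-\delta)P'+\delta P''$ with $P''=\Ber(s)$ forces $P'=\Ber(p')$ with $(1-\delta)p' = p-\delta s$, so $p'$ ranges over the interval $\big[\tfrac{p-\delta}{1-\delta},\tfrac{p}{1-\delta}\big]\cap[0,1]$. Likewise, $q'$ ranges over $\big[\tfrac{q-\delta}{1-\delta},\tfrac{q}{1-\delta}\big]\cap[0,1]$, independently. Hence $\drp{P}{Q}$ is the minimum of $\dr{\Ber(p')}{\Ber(q')}$ over this box. Since $p>q+\delta$, every admissible $p'$ exceeds every admissible $q'$; in particular the smallest $p'$, namely $\tfrac{p-\delta}{1-\delta}$, exceeds the largest $q'$, namely $\tfrac{q}{1-\delta}$. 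By \Cref{bernoulli-monotonic}, for fixed $p'$ the divergence is decreasing in $q'$ on $[0,p']$, so the optimal choice is $q'=\tfrac{q}{1-\delta}$. Then, for fixed $q'$, it is increasing in $p'$ on $[q',1]$, so the optimal choice is $p'=\tfrac{p-\delta}{1-\delta}$. This yields the second case; the first case follows by the mirror argument. (For $\alpha=1$ the same monotonicity holds for KL; one can also handle it by the limit.)

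The monotonicity claims then follow from the formula. Fix $p$ and let $q$ increase on $[p,1]$. On the part where $|p-q|\le\delta$ the value is $0$. Beyond that, the first case applies: $q\mapsto \tfrac{q-\delta}{1-\delta}$ is increasing and stays above $\tfrac{p}{1-\delta}$, so by \Cref{bernoulli-monotonic} the divergence is increasing. The case $q\in[0,p]$ is symmetric, as is the version with $p$ varying for fixed $q$. Continuity at the boundaries $|p-q|=\delta$ (where the formulas give equal arguments, hence value $0$) ensures that the piecewise description is monotone globally.

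The main obstacle is the reduction step: justifying carefully that the feasible $(p',q')$ set is exactly that product box, including clipping at $[0,1]$ (for instance when $p<\delta$), and that the boundary points are attained. I would handle the clipping by noting that the chosen optimal endpoints $\tfrac{p-\delta}{1-\delta}$ and $\tfrac{q}{1-\delta}$ lie in $[0,1]$ whenever $p>q+\delta$.
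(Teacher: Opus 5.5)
Your proposal is correct and follows essentially the same route as the paper: the admissible $(p',q')$ form the box $\bigl[\frac{p-\delta}{1-\delta},\frac{p}{1-\delta}\bigr]\times\bigl[\frac{q-\delta}{1-\delta},\frac{q}{1-\delta}\bigr]$ (clipped to $[0,1]$), \Cref{bernoulli-monotonic} pushes $p'$ and $q'$ toward each other, they are set equal when the intervals overlap, and the monotonicity claims are read off from the resulting formula. Your version is more careful than the paper's sketch, since it also handles the clipping, shows the infimum is attained, and derives the first case from the second via the $0\leftrightarrow 1$ relabeling.
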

\begin{proof}
From \Cref{def:approx-renyi-divergence}, we want to find distributions $P' \sim \Ber(p'), Q' \sim \Ber(q')$ that minimize $\dr{P'}{Q'}$ such that $P = (1-\delta)P' + \delta P''$ and $Q = (1-\delta)Q' + \delta Q''$. From \Cref{bernoulli-monotonic}, it suffices to move $p'$ and $q'$ closer to equality. The maximum we can increase $p'$ to is clearly $p / (1-\delta)$ by setting $P'' = 0$, and the minimum is $(p - \delta )/(1 - \delta)$ by setting $P'' = 1$. The same is true for $Q$ and the first result naturally follows by observing in cases where $p'$ and $q'$ would otherwise cross, they can be made equal instead.

The second observation follows directly from applying \Cref{bernoulli-monotonic} to each term.
\end{proof}

\begin{theorem}\label{thm:opt-partition-prim}
    Let 
    $$
    L(q) = \max\left\{
        p \in [q, 1]: 
            \drp{\Ber(p)}{\Ber(q)} \le \eps \land
            \drp{\Ber(q)}{\Ber(p)} \le \eps
    \right\}
    $$
    and 
    $$
    \piopt(n) = \begin{cases}
        0 & n = 0\\
        L(\piopt(n-1)) & n > 0
    \end{cases}.
    $$
    Then $\piopt(n)$ is the optimal partition selection primitive for $(\delta, \alpha, \eps)$-RDP with $\Delta_1 = 1$.
\end{theorem}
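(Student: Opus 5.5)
The proof has two parts: showing that $\piopt$ is private, and showing that it dominates every other private primitive pointwise.

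\textbf{Reduction to a single coordinate.} For a primitive $\pi$, the mechanism $M_\pi$ outputs independent Bernoullis, one per element. Neighbors under $\Delta_1 = 1$ differ in exactly one coordinate $u$, and there by one unit: $X_u = n$ versus $X'_u = n+1$. On every other coordinate the two output distributions are identical. By the composition statement of \Cref{prop:apx-rdp-composition}, with zero divergence and $\delta=0$ on the identical coordinates, the product distributions have approximate divergence at most that of the differing coordinate. Conversely, marginalizing to coordinate $u$ is post-processing, so it cannot increase the approximate Rényi divergence. This monotonicity of $\drp{\cdot}{\cdot}$ under post-processing follows from data processing for $\dr{\cdot}{\cdot}$ applied to the decompositions $P',Q'$. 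Hence $\pi$ is $(\delta,\alpha,\eps)$-RDP if and only if, for every $n\ge 0$, both $\drp{\Ber(\pi(n))}{\Ber(\pi(n+1))}\le\eps$ and $\drp{\Ber(\pi(n+1))}{\Ber(\pi(n))}\le\eps$ hold. The same two-sided condition is required in both directions because neighbors are symmetric.

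\textbf{Privacy of $\piopt$.} First, $L(q)$ is well defined: $p=q$ is feasible, and the feasible set is closed, since approximate Bernoulli divergence is continuous in $p$ where it is finite (lower semicontinuity suffices for closedness). By \Cref{bernoulli-divergence}, for $p\ge q$ both divergences are increasing in $p$, so the feasible set is an interval $[q, L(q)]$. By construction, $\piopt(n+1)=L(\piopt(n))\ge \piopt(n)$ satisfies the two-sided constraint, so the reduction gives privacy.

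\textbf{Optimality.} Let $\pi$ be any private primitive. I will show $\pi(n)\le\piopt(n)$ by induction on $n$; the base case is $\pi(0)=0=\piopt(0)$. For the inductive step, assume $\pi(n)\le\piopt(n)$. The key claim is that $L$ is nondecreasing in $q$. Given this, if $\pi(n+1)\le\pi(n)$ we are done immediately, since $\pi(n)\le\piopt(n)\le\piopt(n+1)$. Otherwise $\pi(n+1)\in[\pi(n),L(\pi(n))]$ by the constraint, so $\pi(n+1)\le L(\pi(n))\le L(\piopt(n))=\piopt(n+1)$.

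\textbf{Monotonicity of $L$ (the main obstacle).} Take $q_1\le q_2$ and $p=L(q_1)$. If $p\le q_2$, then $L(q_2)\ge q_2\ge p$ trivially. Otherwise $q_1\le q_2\le p$, and by \Cref{bernoulli-divergence}, for fixed $p$ the quantity $\drp{\Ber(p)}{\Ber(q)}$ is decreasing in $q$ on $[0,p]$. Likewise $\drp{\Ber(q)}{\Ber(p)}$, viewed as a function of its first argument $q$ with the second fixed at $p$, is decreasing on $[0,p]$. So replacing $q_1$ by $q_2$ only decreases both divergences, $p$ remains feasible for $q_2$, and $L(q_2)\ge p$.

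The delicate points are therefore the per-coordinate reduction (data processing and tensorization for approximate RDP) and the correct use of both monotonicity directions in \Cref{bernoulli-divergence}. I expect the reduction to need the most care: one must verify that a decomposition witnessing the approximate divergence of the product marginalizes to a valid decomposition witnessing the divergence on coordinate $u$.
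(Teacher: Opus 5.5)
Your proof is correct and follows essentially the same route as the paper. Privacy holds by construction after reducing to the single differing coordinate. Optimality comes from the monotonicity of $L$, obtained via both monotonicity directions in \Cref{bernoulli-divergence} and applied at the first index where a competitor $\pi$ would exceed $\piopt$; this is exactly your induction. You also make explicit a few points the paper leaves implicit: that the maximum defining $L$ exists, the case $\pi(n+1)\le\pi(n)$, and the per-coordinate reduction via post-processing.
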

\begin{proof}
\emph{(Privacy)} Consider any two neighbors $X, X'$ whose counts of a partition $u \in U$ are $n, n'$ respectively (and the counts of other items are the same) where $n = n' - 1$. 
In either case, by construction we have $$\drp{M_{\pi^*}(X)}{M_{\pi^*}(X')} = \drp{\Ber(\piopt(n))}{\Ber(\piopt(n - 1))} \le \eps,$$ 
and, $$\drp{M_{\pi^*}(X')}{M_{\pi^*}(X)} = \drp{\Ber(\piopt(n-1))}{\Ber(\piopt(n))} \le \eps.$$
Thus, $M_{\pi^*}$ is $\delta$-approximate $(\alpha, \eps)$-RDP as desired.

\emph{(Optimality)} First we will show that $L$ is increasing in $q$. 
Fix any $p > q$. By \Cref{bernoulli-divergence}, the divergence terms $\drp{\Ber(p)}{\Ber(q)}$ and $\drp{\Ber(q)}{\Ber(p)}$ are decreasing as $q$ increases towards $p$.
Therefore as $q$ increases, larger values of $p$ satisfy the divergence constraints.

Now consider any partition selection primitive $\pi$ where $\pi(n_0) > \piopt(n_0)$; let $n_0$ be the smallest such index, so that $\pi(n_0 - 1) \le \piopt(n_0 - 1)$. Since $\pi^*(n_0) = L(\pi^*(n_0 - 1))$, the monotonicity of $L$ implies that $\pi(n_0) > L(\pi(n_0 - 1))$. In other words, $\drp{\Ber(\pi(n_0))}{\Ber(\pi(n_0 - 1))} > \eps$ or $\drp{\Ber(\pi(n_0 - 1))}{\Ber(\pi(n_0))} > \eps$. This means that $M_{\pi}$ is not $\delta$-approximate $(\alpha, \eps)$-RDP.
\end{proof}

\Cref{alg:piopt} computes $\piopt$ iteratively. It reduces the problem to computing the value of $p$ which satisfies an exact (pure) RDP guarantee between $\Ber(p)$ and a fixed $\Ber(q)$. Both $p_1$ and $p_2$ can be solved and guarantee convergence with simple bisection techniques, as they reduce to solving bounded convex and quasi-convex minimization problems, respectively. 

\begin{algorithm}
\caption{Procedure to compute $\piopt(n)$}\label{alg:piopt}
\begin{algorithmic}
\Require $n \in \mathbb{Z}_{\ge 0}$, $\eps > 0$, $\alpha > 1$, $\delta \in (0, 1)$
\State $\pi \gets 0$
\Loop\ $n$ times
    \If{$\pi + \delta \ge 1$} \textbf{ return } 1
    \EndIf
    \State $q = \min\left\{1, \frac{\pi}{1 - \delta}\right\}$
    \State $p_1 \gets \left\{p \in [q, 1]: \dr{\Ber(p)}{\Ber(q)} = \eps\right\}$
    \State $p_2 \gets \left\{p \in [q, 1]: \dr{\Ber(q)}{\Ber(p)} = \eps\right\}$
    \State $p = \min(p_1 \cup p_2)$    
    \State $\pi \gets \min\left\{1, p + \delta - p \delta \right\}$
\EndLoop
\State \textbf{return } $\pi$
\end{algorithmic}
\end{algorithm}

\subsection{General optimality of $\piopt$}

Here we will extend our optimality result to show $\piopt$ dominates mechanisms that do not necessarily consider each partition independently. Like, \cite{desfontaines-partition}, we restrict to the $\Delta_1 = 1$ case.

\begin{theorem}\label{thm:general-opt}
$M_{\pi^*}$ is the optimal partition selection mechanism for $(\delta, \alpha, \eps)$-RDP with $\Delta_1 = 1$
\end{theorem}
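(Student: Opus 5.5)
Following \cite{desfontaines-partition}, the plan is to reduce an arbitrary mechanism $M$ to per-partition marginal release probabilities and compare those with $\piopt$. Fix any mechanism $M$ satisfying $(\delta,\alpha,\eps)$-RDP with $\Delta_1 = 1$. For a dataset $X$ and an item $u$, write $f_u(X) = \bbP(u \in M(X))$. By linearity of expectation, $\E[|M(X)|] = \sum_{u : X_u > 0} f_u(X)$, since $M(X)$ only contains items with positive count. So it suffices to show $f_u(X) \le \piopt(X_u)$ for every $X$ and $u$. Summing then gives $\E[|M(X)|] \le \sum_u \piopt(X_u) = \E[|M_{\piopt}(X)|]$. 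The privacy of $M_{\piopt}$ itself was already shown in \Cref{thm:opt-partition-prim}.

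The first key step is to show that marginals inherit the privacy constraint. The event $\{u \in M(X)\}$ is a post-processing of $M(X)$ (the indicator map $S \mapsto \mathbf{1}[u \in S]$). Approximate R\'enyi divergence satisfies the data-processing inequality: if $P = (1-\delta)P' + \delta P''$ and $Q = (1-\delta)Q'+\delta Q''$, then pushing each component through the same map preserves these mixture decompositions, and ordinary R\'enyi divergence is non-increasing under the map. Hence for neighbors $X, X'$,
$$\drp{\Ber(f_u(X))}{\Ber(f_u(X'))} \le \eps,$$
and the same holds with the arguments swapped.

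The second step is an induction along a chain. Fix $X$ and $u$ with $X_u = n$. Let $X^{(k)}$ denote $X$ with the $u$-coordinate replaced by $k$, for $k = 0,\dots,n$. Consecutive datasets in this chain are neighbors under $\Delta_1 = 1$. At the base, $X^{(0)}$ has zero count on $u$, so $f_u(X^{(0)}) = 0 = \piopt(0)$. For the inductive step, suppose $f_u(X^{(k-1)}) \le \piopt(k-1)$. By the first step, $p = f_u(X^{(k)})$ and $q = f_u(X^{(k-1)})$ satisfy both divergence constraints.

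If $p \le q$, then $p \le \piopt(k-1) \le \piopt(k)$, since $L(q) \ge q$ makes $\piopt$ nondecreasing. Otherwise $p \in [q,1]$ is feasible in the definition of $L(q)$, so $p \le L(q)$. The monotonicity of $L$ established in the proof of \Cref{thm:opt-partition-prim} then gives $p \le L(\piopt(k-1)) = \piopt(k)$. At $k = n$ this yields $f_u(X) \le \piopt(X_u)$.

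The main obstacle is the data-processing step for \emph{approximate} R\'enyi divergence, which the paper has not stated explicitly; the mixture-pushforward argument above should handle it cleanly. A secondary subtlety is that $L$ is defined as a maximum, so it needs to be well defined and monotone. Attainment of the maximum follows from continuity and the monotonicity in \Cref{bernoulli-divergence}. Monotonicity of $L$ was argued earlier and can be cited directly.
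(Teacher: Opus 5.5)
Your proposal is correct and takes essentially the same route as the paper: fix the other coordinates, bound each marginal $\bbP(u \in M(X))$ by $\piopt(X_u)$ using the argument of \Cref{thm:opt-partition-prim}, and sum by linearity of expectation. The paper does this in two lines; your version spells out what it leaves implicit, namely the data-processing property of approximate R\'enyi divergence that reduces the marginals to Bernoulli constraints, and the induction along the chain $X^{(0)},\dots,X^{(n)}$.
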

\begin{proof}

Consider a single partition $u$, and denote $f(u) = \Pr(u \in M(X))$ the probability of releasing $u$ under an arbitrary mechanism $M$ under $(\delta, \alpha, \eps)$-RDP. Fixing all other partitions, we have $f(u) \le \piopt(X_u)$ by \Cref{thm:opt-partition-prim}. Thus,
$
\E[|M(X)|] = \sum_{u \in U} f(u) \le \sum_{u \in U} \piopt(X_u) = \E[|M_{\pi^*}(X)|].
$
\end{proof}


\subsection{Non-existence of optimal partition selection when $\Delta_1 \neq 1$} \label{subsec:non-opt}

Unlike the case $\Delta_1 = 1$, we show below that, when $\Delta_1 \ne 1$, the optimal selection mechanism does \emph{not} exist for certain regime of parameters.

\begin{theorem}\label{thm:no-optimal-multi}
For any $0 < \alpha, \eps, \delta$ such that\footnote{Here $\pi^*$ is as defined in \Cref{thm:opt-partition-prim} for $\Delta_1 = 1$.}\footnote{It is simple to see that the inequality $\pi^*(2) > 3 \cdot \pi^*(1)$ holds for all sufficiently large $\eps$ and sufficiently small $\delta$.} $\pi^*(2) > 3 \cdot \pi^*(1)$, there is no optimal selection mechanism even for $\Delta_1 = 2$. 
\end{theorem}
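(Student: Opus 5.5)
The plan is to exhibit two datasets $X$ and $Y$, and two mechanisms $M_A$ and $M_B$ satisfying $(\delta,\alpha,\eps)$-RDP with $\Delta_1 = 2$, such that $\E[|M_A(Y)|] > \E[|M(Y)|]$ for every admissible $M$ that does well on $X$, and conversely. Then no single mechanism can dominate on all datasets. I would take
\[
Y = (4,0,0,\dots), \qquad X = (1,1,\dots,1,0,\dots)
\]
(a small number of singleton partitions). The hypothesis $\pi^*(2) > 3\pi^*(1)$ is exactly what should separate the two regimes.

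\textbf{Step 1 (upper bounds via \Cref{thm:opt-partition-prim}).} Fix any admissible $M$ and a partition $u$.
\begin{itemize}
\item Moving $X_u$ by $2$ is a legal neighboring step. So, arguing exactly as in the optimality part of \Cref{thm:opt-partition-prim} with ``one unit'' replaced by ``two units'' and the other coordinates fixed, we get $\Pr(u \in M(Z)) \le \pi^*(\lfloor Z_u/2\rfloor)$ along chains $0 \to 2 \to 4$.
\item Likewise, changing two coordinates by one each is a legal step. So from the empty dataset, the event ``$M$ outputs something among $\{u,v\}$'' has probability at most $\pi^*(1)$ on $(1,1)$.
\end{itemize}
These give $\E[|M(Y)|] \le \pi^*(2)$. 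They also give a bound on $\E[|M(X)|]$ in terms of how much mass $M$ spends on odd counts.

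\textbf{Step 2 (two competing mechanisms).}
\begin{itemize}
\item $M_A$ should be tuned to even counts: essentially the per-partition primitive $n \mapsto \pi^*(\lfloor n/2 \rfloor)$, possibly with the odd values lowered, so that $\E[|M_A(Y)|] = \pi^*(2)$.
\item $M_B$ should be tuned to singletons: it gives every count-$1$ partition a release probability close to $\pi^*(1)$. To pay for this, it caps the count-$4$ probability strictly below $\pi^*(2)$.
\end{itemize}
The key inequality to derive is that any $M$ achieving $\Pr(u \in M(Y)) = \pi^*(2)$ must have small release probability on odd counts. Following the chain $1 \to 3$ (a step of size $2$) and $3 \to 4$, this release probability should come out at most roughly $\pi^*(2)/3$-ish for count $1$, i.e.\ less than $\pi^*(1)$ by hypothesis. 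Hence $\E[|M(X)|] < \E[|M_B(X)|]$.

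\textbf{Main obstacle.} The hardest part is verifying privacy of $M_A$ and $M_B$ against the ``two coordinates each move by one'' neighbors. For these neighbors a naive per-coordinate primitive pays composition cost $2\eps$. I would first try to make odd counts behave like the adjacent lower even count, so that a unit change in a coordinate alters the release probability only across one parity boundary. If two coordinates can both cross simultaneously, I would instead correlate the decisions so that at most one partition's status changes, using a shared coin. The constant $3$ in the hypothesis suggests that the bookkeeping is exactly this: one needs three singleton-type steps to reach count $4$'s worth of mass. I expect the final comparison to reduce to $\pi^*(2) > 3\pi^*(1)$ via \Cref{bernoulli-divergence} monotonicity.
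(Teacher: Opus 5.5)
\textbf{There is a genuine gap.} The key inequality in your Step~2 is false, and the datasets you chose cannot produce a separation.

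\emph{Step~1 and the key inequality.} Your Step~1 bound $\pi^*(\lfloor Z_u/2\rfloor)$ is wrong at odd counts, because $0\to 1$ is already a single neighboring step. The correct bound is $\pi^*(\lceil Z_u/2\rceil)$. It is attained by the mechanism $M''$ that ignores every partition except $1$ and releases it with probability $\pi^*(\lceil Z_1/2\rceil)$. This is private because $\lceil\cdot/2\rceil$ moves by at most one when $Z_1$ moves by at most $2$. That observation also dissolves your ``main obstacle'' for $M_A$: no parity tricks or shared coins are needed. Now $M''$ achieves $\pi^*(2)$ on $Y$ while releasing count-$1$ partitions with probability exactly $\pi^*(1)$. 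So no argument can show that achieving $\pi^*(2)$ on $Y$ pushes count-$1$ release below $\pi^*(1)$. Chaining neighbor constraints downward from a large value at count $4$ yields lower bounds, not upper bounds.

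\emph{The datasets $Y=(4,0)$ and $X=(1,1)$ are not in tension.} For $\alpha>1$ we have $\pi^*(1)=L(0)=\delta$. Consider the mechanism that outputs $U$ with probability $\delta$ on $(1,1)$. On every other input it outputs $\{1\}$ with probability $\pi^*(\lceil Z_1/2\rceil)$ if $Z_2=0$, resp.\ $\pi^*(\lceil (Z_1-1)/2\rceil)$ if $Z_2\ge 1$, and $\emptyset$ otherwise. This mechanism is $(\delta,\alpha,\eps)$-RDP for $\Delta_1=2$: the level index moves by at most one between neighbors, and every neighbor of $(1,1)$ releases with probability at most $\delta$, which the $\delta$-slack absorbs. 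It is also optimal on both $X$ and $Y$, so this pair can never witness non-existence.

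\emph{The missing idea.} The obstruction lives in the \emph{joint} output law, and it must be exposed on a dataset adjacent to both regimes. Since $\bbP(M(1,1)\neq\emptyset)\le\pi^*(1)$, a mechanism attaining $\E[|M(1,1)|]=2\pi^*(1)$ must output $U$ with probability $\pi^*(1)$ and never a singleton. A mechanism not attaining this is beaten on $(1,1)$ by the one that outputs $U$ with probability $\pi^*(1)$ there and $\emptyset$ everywhere else. Now use $(3,1)$, which is a neighbor of both $(1,1)$ and $(3,0)$.
\begin{itemize}
\item Comparing with $(1,1)$ gives $\bbP(M(3,1)\text{ is a singleton})\le\pi^*(1)$.
\item Comparing with $(3,0)$, where $U$ is impossible, gives $\bbP(M(3,1)=U)\le\pi^*(1)$.
\end{itemize}
Hence $\E[|M(3,1)|]\le 3\pi^*(1)<\pi^*(2)=\E[|M''(3,1)|]$. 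That is where the constant $3$ comes from. Your $Y=(4,0)$ is at $L^1$ distance $4$ from $(1,1)$, so the all-or-nothing structure forced at $(1,1)$ does not constrain it.
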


\begin{proof}
Suppose for the sake of contradiction that there exists an optimal selection mechanism $M^*$ for $(\delta, \alpha, \eps)$-RDP with $\Delta_1 = 2$ and universe $U = \{1, 2\}$. Consider neighboring datasets $X^1 = (1, 1)$ and $X^0 = (0, 0)$. Notice that, by comparing $\bbP(M^*(X^1) \ne \emptyset)$ and $\bbP(M^*(X^0) \ne \emptyset) = 0$, we have
\begin{align} \label{eq:tmp-emptyset-lb}
\bbP(M^*(X^1) \ne \emptyset) \leq \pi^*(1).
\end{align}
We consider two cases, based on $\bbP(M^*(X^1) = U)$.

\paragraph{Case I: $\bbP(M^*(X^1) = U) < \pi^*(1)$.} In this case, consider another mechanism $M'$ where $\bbP(M'(X^1) = U) = \pi^*(1), \bbP(M'(X^1) = \emptyset) = 1 - \pi^*(1)$, and $\bbP(M'(X) = \emptyset) = 1$ for all $X \ne X^1$. It is simple to check that $M$ is $\delta$-approximate $(\alpha, \eps)$-RDP and that $\E[|M^*(X^1)|] < \E[|M'(X^1)|]$, a contradiction.

\paragraph{Case II: $\bbP(M^*(X^1) = U) \ge \pi^*(1).$} Note that from \eqref{eq:tmp-emptyset-lb}, we must have $\bbP(M^*(X^1) = U) = \pi^*(1)$ and $\bbP(M^*(X^1) = \emptyset) = 1 - \pi^*(1)$. Consider another dataset $X^2 = (3, 1)$, which is a neighbor of $X^1$. By comparing $\bbP(M^*(X^2) \notin \{U, \emptyset\})$ with $\bbP(M^*(X^1) \notin \{U, \emptyset\}) = 0$, we have 
\begin{align*}
\bbP(M^*(X^2) \notin \{U, \emptyset\}) \leq \pi^*(1).
\end{align*}
Consider yet another dataset $X^3 = (3, 0)$, which is a neighbor of $X^2$. By the guarantee of partition selection, we must have $\bbP(M^*(X^3) = U) = 0$
Similarly, by comparing $\bbP(M^*(X^2) = U)$ with $\bbP(M^*(X^3) = U) = 0$, we have 
\begin{align*}
\bbP(M^*(X^2) = U) \leq \pi^*(1).
\end{align*}
Together, this implies that
\begin{align*}
\E[|M^*(X^2)|] &= 1 \cdot \bbP(M^*(X^2) = \{1\}) + 1 \cdot \bbP(M^*(X^2) = \{2\}) + 2 \cdot \bbP(M^*(X^2) = U) \\
&= \bbP(M^*(X^2) \notin \{U, \emptyset\}) + 2 \cdot \bbP(M^*(X^2) = U) \\
&\leq 3 \pi^*(1) \\
&< \pi^*(2).
\end{align*}
Meanwhile, consider a mechanism $M''$ that ignores all elements of $U$ except for $1$ and include $1$ in the output with probability $\pi^*(\lceil i/2 \rceil)$. This mechanism satisfies $\delta$-approximate $(\alpha, \eps)$-RDP and $\E[|M''(X^2)|] = \pi^*(2)$, which contradicts the optimality of $M^*$.
\end{proof}

It remains an interesting open question to extend the above non-optimality result to a larger regime of parameters $\eps, \delta$.
\section{Weighted partition selection and the SNAPS mechanism}\label{sec:weighted-selection}

In this section we consider the more general weighted partition selection problem, where users hold a weight associated with each partition. We introduce a new mechanism: SNAPS (Smooth Norm-Aware Partition Selection). We remark that, due to the non-optimality result from \Cref{subsec:non-opt}, there is no optimal mechanism in this setting. Therefore, we aim to design an algorithm that provides a good utility and is practical.

Our idea is to derive a weighted partition selection primitive which affords users a ``smooth'' privacy loss depending on how much weight they hold. This allows us to derive a composition-based mechanism where users can hold multiple partitions with a total bound on the $L^r$ weight norm.

To define the primitive, we let
$$
    L(q, \eps, \delta) = \max\left\{
        p \in [q, 1]: 
            \drp{\Ber(p)}{\Ber(q)} \le \eps \land
            \drp{\Ber(q)}{\Ber(p)} \le \eps
    \right\}.
$$

\begin{definition}[Weighted partition selection primitive]\label{def:weighted-selection-prim}
Given parameters $\eps_0, \delta_0, r, \eps_1, \delta_1,\disc, \Delta > 0$, first we define the \emph{discretized weighted partition selection primitive} $\discprim_r$ over non-negative integers as follows: 
\begin{itemize}
\item For $n = 0$, let $\discprim_r(0) = 0$
\item For $n > 0$, let $\ndisc = \left\lceil \frac{\Delta}{\disc} \right\rceil$ and
$$\discprim_r(n) = \min_{i \in [\min\{n, \ndisc\}]} L\Paren{\discprim_{r}(n - i), \eps_0 + \eps_1 \cdot \Paren{\disc(i-1)}^r, \delta_0 + \delta_1 \cdot \Paren{\disc(i-1)}^r}$$
\end{itemize}
We define the \emph{weighted partition selection primitive} $\phi_r$ on non-negative real numbers as follows:
\begin{align*}
\phi_r(y) = \discprim_r\Paren{\left\lfloor \frac{y}{\disc} \right\rfloor}.
\end{align*}
\end{definition}

We next point out that the definition of weighted partition selection primitive almost immediately gives the privacy guarantee for $\Delta_0 = 1$ setting, as shown below.\footnote{We note that if we impose an additional constraint on the \emph{minimum} weight value that a user can submit, we can avoid the need for $\eps_0, \delta_0$ params with a fine enough discretization. Furthermore, this additional restriction does indeed improve utility relative to \Cref{def:weighted-selection-prim}. We avoid this for sake of presenting an algorithm which can always replace the Gaussian mechanism for partition selection.}

\begin{lemma} \label{lem:weighted-one-item}
For $\Delta_0 = 1$ and $\Delta_\infty \leq \Delta$, the weighted partition selection primitive $\phi_r$ satisfies $(\delta_0 + \delta_1 \cdot \Delta_{\infty}^r, \alpha, \eps_0 + \eps_1 \cdot \Delta_{\infty}^r)$-RDP.
\end{lemma}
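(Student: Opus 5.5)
Since $\Delta_0 = 1$, two neighboring datasets $X, X'$ differ in exactly one coordinate $u$. The mechanism $M_{\phi_r}$ treats coordinates independently, so all other coordinates have identical output distributions under $X$ and $X'$. The divergence of the product therefore reduces to the divergence of the single Bernoulli on $u$; formally, apply \Cref{prop:apx-rdp-composition} with zero cost on the identical coordinates. So it suffices to show the following. Let $y = X_u$ and $y' = X'_u$ with, say, $y' \le y \le y' + \Delta_\infty$ and $y - y' \le \Delta$. Then both directions of $\mathrm{D}_\alpha^{\delta'}$ between $\Ber(\phi_r(y))$ and $\Ber(\phi_r(y'))$ are at most $\eps_0 + \eps_1 \Delta_\infty^r$, where $\delta' = \delta_0 + \delta_1 \Delta_\infty^r$.

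Set $n = \lfloor y/\disc \rfloor$, $n' = \lfloor y'/\disc\rfloor$ and $i = n - n'$. If $i = 0$ the two outputs coincide and the divergence is $0$. Otherwise $1 \le i \le n$. From $y - y' \le \Delta_\infty$ we get $i\,\disc < y - y' + \disc$, hence $\disc(i-1) < \Delta_\infty \le \Delta$. This gives $i \le \ndisc$ (using $\ndisc = \lceil \Delta/\disc\rceil$), so $i$ ranges over the index set $[\min\{n,\ndisc\}]$ in the definition of $\discprim_r(n)$. By that definition,
\[
\phi_r(y) = \discprim_r(n) \le L\Paren{\discprim_r(n-i),\ \eps_0 + \eps_1 (\disc(i-1))^r,\ \delta_0 + \delta_1(\disc(i-1))^r},
\]
with $\discprim_r(n-i) = \phi_r(y')$. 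Since $(\disc(i-1))^r \le \Delta_\infty^r$, the privacy parameters at index $i$ are at most $(\eps_0 + \eps_1\Delta_\infty^r, \delta')$.

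Two monotonicity facts are needed. First, $\discprim_r$ is nondecreasing, so $p := \phi_r(y) \ge q := \phi_r(y')$. This follows by induction: each candidate $L(\discprim_r(n-i),\cdot,\cdot) \ge \discprim_r(n-i)$, and one compares the candidate sets for $n$ and $n+1$ using monotonicity of $L$ in $q$, as in the proof of \Cref{thm:opt-partition-prim}. Second, $L(q,\eps,\delta)$ is nondecreasing in $\eps$ and $\delta$. For $\eps$ this is immediate from the definition; for $\delta$ it holds because $\mathrm{D}_\alpha^\delta$ is nonincreasing in $\delta$, which can be read off the explicit formula in \Cref{bernoulli-divergence}. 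Together these give $q \le p \le L(q, \eps_0+\eps_1\Delta_\infty^r, \delta')$.

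The conclusion then follows from \Cref{bernoulli-divergence}: for fixed $q$, both $\mathrm{D}^{\delta'}_\alpha(\Ber(p)\|\Ber(q))$ and $\mathrm{D}^{\delta'}_\alpha(\Ber(q)\|\Ber(p))$ are increasing in $p$ on $[q,1]$. At $p = L(\cdot)$ both are at most $\eps_0 + \eps_1\Delta_\infty^r$ by definition of $L$; the max is attained since the divergences are continuous in $p$ there, or one can take a supremum and use monotonicity. Hence both are bounded at our $p$ as well.

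I expect the main obstacle to be the bookkeeping around discretization: checking that $i = n - n'$ is always an admissible index with $(\disc(i-1))^r \le \Delta_\infty^r$. The floor can push $i$ up by one relative to $(y-y')/\disc$, and the $(i-1)$ offset in the definition is exactly what absorbs this. A secondary point is the monotonicity of $\discprim_r$, needed to orient $p \ge q$, which requires the short induction above.
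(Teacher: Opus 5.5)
Your proposal is correct and follows the same route as the paper's proof: reduce to the single differing coordinate, note that $i=\lfloor y/\disc\rfloor-\lfloor y'/\disc\rfloor\le\ndisc$, bound $\discprim_r$ at the larger index by the $i$-th candidate in its defining minimum, and relax the parameters to $(\eps_0+\eps_1\Delta_\infty^r,\ \delta_0+\delta_1\Delta_\infty^r)$; you also make explicit several steps the paper leaves implicit (the case $i=0$, monotonicity of $\discprim_r$, monotonicity of $L$ in $(\eps,\delta)$, and monotonicity of the divergences in $p$ on $[q,1]$). One small fix to your induction for monotonicity of $\discprim_r$: when $n+1\le\ndisc$, the minimum defining $\discprim_r(n+1)$ has an extra candidate $i=n+1$, which monotonicity of $L$ in $q$ does not cover; it is handled instead by $L\bigl(0,\eps_0+\eps_1(\disc n)^r,\delta_0+\delta_1(\disc n)^r\bigr)\ge L\bigl(0,\eps_0+\eps_1(\disc(n-1))^r,\delta_0+\delta_1(\disc(n-1))^r\bigr)\ge\discprim_r(n)$, i.e.\ by the parameter-monotonicity of $L$ that you prove separately.
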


\begin{proof}
Consider any two neighboring datasets $X, X'$ which differs on a single item $u$. 
Assume w.l.o.g. that $X_u \leq X_u'$. By our assumption, we have $X_u' - X_u \leq \Delta_{\infty}$. Let $z = \left\lfloor\frac{X_u}{\disc}\right\rfloor$ and $z' = \left\lfloor\frac{X_u'}{\disc}\right\rfloor$. We also have $i := z' - z \leq \lceil \Delta_{\infty} / \disc \rceil \leq \ndisc$.

Let $\delta = \delta_0 + \delta_1 \cdot \Delta_\infty^r$ and $\eps = \eps_0 + \eps_1 \cdot \Delta_\infty^r$. We have
\begin{align*}
&\max\{\drp{M_{\phi_r}(X')}{M_{\phi_r}(X)}, \drp{M_{\phi_r}(X)}{M_{\phi_r}(X')}\} \\
&= \max\{\drp{\phi_r(X_u')}{\phi_r(X_u)}, \drp{\phi_r(X_u)}{\phi_r(X_u')}\} \\ &= \max\{\drp{\Ber{\Paren{\discprim_r(z)}}}{\Ber{\Paren{\discprim_r(z')}}}, \drp{\Ber{\Paren{\discprim_r(z')}}}{\Ber{\Paren{\discprim_r(z)}}}\}. 
\end{align*}
Finally, by definition of $\discprim_r(z') = \discprim_r(z+i)$, it is at most $L(\discprim_{r}(z), \eps_0 + \eps_1 \cdot \Paren{\disc(i-1)}^r, \delta_0 + \delta_1 \cdot \Paren{\disc(i-1)}^r) \leq L(\discprim_{r}(z), \eps, \delta)$, where the inequality follows from $i - 1 \leq \Delta_\infty / \disc$. Thus, by definition of $L$, the above term is bounded above by $\eps$ as desired.
\end{proof}

The above result can be easily extended to the case $\Delta_0 > 1$ with $L^r$ norm bound $\Delta_r$, with the cost being a $\Delta_0$ multiplicative factor in front of $\eps_0, \delta_0$, as formalized below.

\begin{theorem}[SNAPS Mechanism] \label{cor:snap}
For any $\Delta_0 \in \N$ and $\Delta_r \leq \Delta$, the weighted partition selection primitive $\phi_r$ satisfies $(\delta_0 \cdot \Delta_0 + \delta_1 \cdot \Delta_r^r, \alpha, \eps_0 \cdot \Delta_0 + \eps_1 \cdot \Delta_r^r)$-RDP. We refer to the associated weighted partition selection mechanism $M_{\phi_r}$ as the SNAPS mechanism.
\end{theorem}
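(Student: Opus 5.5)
The plan is to reduce to \Cref{lem:weighted-one-item} item by item and then add up the per-item guarantees with \Cref{prop:apx-rdp-composition}.

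Fix neighboring datasets $X, X'$, and let $S = \{u \in U : X_u \neq X'_u\}$. By the $\Delta_0$ bound we have $|S| \le \Delta_0$, and by the $\Delta_r$ bound we have $\sum_{u \in S} |X_u - X'_u|^r \le \Delta_r^r$. The mechanism $M_{\phi_r}$ includes each item independently with probability $\phi_r(X_u)$. Hence $M_{\phi_r}(X)$ is a product distribution $\bigotimes_{u} \Ber(\phi_r(X_u))$, and the same holds for $X'$. Coordinates outside $S$ are identical in the two products and contribute nothing to the divergence. It therefore suffices to bound the approximate R\'enyi divergence between $\bigotimes_{u \in S}\Ber(\phi_r(X_u))$ and $\bigotimes_{u\in S}\Ber(\phi_r(X'_u))$, in both directions.

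For each single item $u \in S$, set $\Delta^{(u)} := |X_u - X'_u|$. Since $\Delta^{(u)} \le \Delta_r \le \Delta$, the pair of datasets agreeing with $X$ except at coordinate $u$ fits the hypotheses of \Cref{lem:weighted-one-item}, with $\Delta_0=1$ and $\Delta_\infty = \Delta^{(u)}$. More directly, the lemma's proof only uses the one-coordinate gap. It yields
$$\mathrm{D}_\alpha^{\delta_0+\delta_1 (\Delta^{(u)})^r}\big(\Ber(\phi_r(X_u))\,\big\|\,\Ber(\phi_r(X'_u))\big) \le \eps_0 + \eps_1 (\Delta^{(u)})^r,$$
and the same bound holds in the reverse direction. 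The one point to check is that the lemma's argument uses $\Delta_\infty$ only through $\lceil \Delta^{(u)}/\disc\rceil \le \ndisc$ and $\disc(i-1) \le \Delta^{(u)}$. So it applies verbatim with the per-item gap in place of a global $\Delta_\infty$.

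Next, apply the product form of \Cref{prop:apx-rdp-composition} iteratively over the at most $\Delta_0$ items of $S$. The $\eps$'s add and the $\delta$'s add, giving total
$$\delta = \sum_{u\in S} \big(\delta_0 + \delta_1 (\Delta^{(u)})^r\big) \le \delta_0 \Delta_0 + \delta_1 \Delta_r^r, \qquad \eps \le \eps_0\Delta_0 + \eps_1 \Delta_r^r .$$
To finish, note that the approximate divergence is monotone: increasing $\delta$ only enlarges the set of allowed decompositions in \Cref{def:approx-renyi-divergence}. It therefore remains at most the stated $\eps$ when we replace the achieved $\delta$ by the larger bound. This step needs a brief justification that a $(1-\delta)$-decomposition can be converted into a $(1-\delta')$-decomposition for $\delta'\ge\delta$ without increasing the divergence. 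One way is to move mass of $P'$ and $Q'$ into $P''$, $Q''$ using the same fraction of the common mixture $P'$ (respectively $Q'$), which leaves $P'$, $Q'$ unchanged.

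I expect the main obstacle to be this bookkeeping rather than anything conceptual. Specifically, we need the per-item lemma with item-specific gaps, the iterated composition (which the proposition states for two factors and which extends by induction), and the monotonicity in $\delta$. The case $\Delta^{(u)}$ smaller than $\disc$ is handled automatically by the $\eps_0,\delta_0$ terms via $i=1$.
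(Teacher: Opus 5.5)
Your proposal is correct and follows the paper's proof essentially step for step. You apply \Cref{lem:weighted-one-item} to each differing coordinate $u$ with its own gap $|X_u - X'_u|$, then sum the per-item pairs $(\delta_u, \eps_u)$ via \Cref{prop:apx-rdp-composition}, using $|S| \le \Delta_0$ and $\sum_{u \in S} |X_u - X'_u|^r \le \Delta_r^r$. Your extra checks only make explicit details the paper leaves implicit: that the lemma needs only the per-item gap, the induction over factors, and monotonicity of $\mathrm{D}_\alpha^{\delta}$ in $\delta$ by re-mixing $P'$ into $P''$.
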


\begin{proof}
Consider any pair of neighboring input datasets $X, X'$ with $\|X - X'\|_r \leq \Delta_r$ and $\|X - X'\|_0 \leq \Delta_0$. In particular, let $S \subseteq U$ denote the set of all $u \in U$ such that $X_u \ne X'_u$. Let $\eps = \eps_0 \cdot \Delta_0 + \eps_1 \cdot \Delta_r^r$ and $\delta = \delta_0 \cdot \Delta_0 + \delta_1 \cdot \Delta_r^r$. We have
\begin{align*}
\drp{M_{\phi_r}(X)}{M_{\phi_r}(X')} &= \drp{\bigotimes_{u \in S} \phi_r(X_u)}{\bigotimes_{u \in S} \phi_r(X'_u)}.
\end{align*}
Let $\delta_u = \delta_0 + \delta_1 \cdot |X_u - X'_u|^r$ and $\eps_u = \eps_0 + \eps_1 \cdot |X_u - X'_u|^r$.
From \Cref{lem:weighted-one-item}, we have $$\drpd{\delta_u}{\phi_r(X_u)}{\phi_r(X'_u)} \leq \eps_u.$$
Notice that $\sum_{u \in S} \delta_u \leq \delta$ and $\sum_{u \in S} \eps_u \leq \eps$. Thus, applying the composition theorem (\Cref{prop:apx-rdp-composition}), we can conclude that $\drp{M_{\phi_r}(X)}{M_{\phi_r}(X')} \leq \eps$. As a result, the mechanism is $\delta$-approximate $(\alpha, \eps)$-RDP.
\end{proof}

We find that the SNAPS mechanism can often be parameterized to outperform the Gaussian mechanism for weighted partition selection for $L^2$ bounded input, and it can replace the Gaussian mechanism in all cases when the noisy weight vector output is not needed.

\newpage
\section{Experiments}
\label{sec:exp}

\begingroup
\setlength{\columnsep}{ .5cm} 
\begin{wrapfigure}{r}{0.5\textwidth} 
\centering
\centering
\includegraphics[width=1 \linewidth]{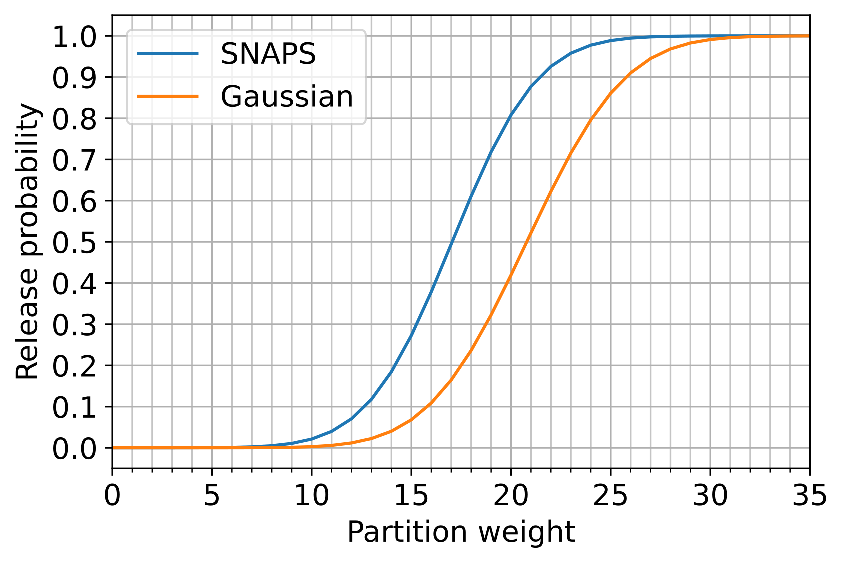} 
\captionsetup{font=footnotesize}
\caption{
Gaussian and SNAPS mechanism release probabilities for $(1, 10^{-5})$-DP. SNAPS is parameterized as in \Cref{sec:exp}. The Gaussian probabilities are determined following \cite{gopi20a,chen2025scalable}, i.e. by splitting the $\delta$ budget evenly in two, one for computing the analytical Gaussian $\sigma$, and one for determining the threshold $\tau = \max_{k \in [\Delta_0]} \{1/\sqrt{k} + \sigma \cdot \Phi^{-1}\left((1-\delta/2)^{1/k}\right)\}$.}
\label{fig:guass-vs-snaps}
\end{wrapfigure}

We showcase the versatility of the SNAPS mechanism by plugging it into two adaptive partition selection mechanisms, MAD2R \cite{chen2025scalable} and PolicyGaussian \cite{gopi20a}.\footnote{
We use \url{https://github.com/heyyjudes/differentially-private-set-union} and \url{https://github.com/jusyc/dp_partition_selection} for PolicyGaussian and MAD2R implementations, respectively.} Both algorithms use an adaptive process to optimize the weight vector associated with each user while maintaining an $L^2$ norm bound. We replace the Gaussian mechanism with the SNAPS mechanism into the very last step in both cases.

\paragraph{Datasets.} We consider a number of datasets commonly used in benchmarking differentially private partition selection. Reddit \cite{gopi20a} is a dataset of posts to \verb*|r/askreddit|, Wiki \cite{wiki} is a dataset of Wikipedia abstracts. Twitter \cite{twitter} is a dataset of customer support tweets, Finance \cite{finance} is a dataset of stock market news data, Amazon \cite{amazon1,amazon2} is a dataset of Amazon reviews, and IMDb \cite{imdb} is a dataset of movie reviews.

\endgroup

For each datasets, we replicate methodology in \cite{gopi20a, chen2025scalable}: each token in a document is a partition, and each document corresponds to a user. In datasets where actual users are tracked, we combine users' documents.

\paragraph{Algorithms and parameters.}
We match the benchmarking conditions of \cite{chen2025scalable} by setting $\Delta_0 = 100$, and ensuring all algorithms satisfy $(1, 10^{-5})$-DP.

\begin{itemize}
    \item \textbf{MAD2R \cite{chen2025scalable}.} We use a basic composition privacy split of $[.1, .9]$ across two rounds. We set $d_{max}=50,b_{min}=0.5, b_{max}=2, C_{lb}=1$, and  $C_{ub}=3$. The adaptive threshold is set to $\beta = 2$ standard deviations above the baseline threshold.
    \item \textbf{Policy \cite{gopi20a}.} We use only a single pass, and set the adaptive threshold to $\beta = 4$ standard deviations above the baseline threshold.
    \item \textbf{SNAPS.} When integrating SNAPS into the above mechanisms, we match all of the above parameters. However, rather than setting the adaptive threshold as a function of the Gaussian noise (and $\beta$), we set it so that the \emph{release probability} under SNAPS is equalized with the Gaussian mechanism at that threshold (i.e. for $\beta = 2$ we choose a threshold that will release with probability $\sim 95\%$ under SNAPS). Furthermore in all cases we set\footnote{While these parameters were generally selected for good utility, we did not perform an exhaustive hyperparameter search. In particular, we use the exact same parameters for the $(.9, .9 \cdot 10^{-5})$-DP stage of MAD2R.} $\alpha = 18.5, \eps_0 = 10^{-5}, \delta_0 = 5 \cdot 10^{-5}, \Delta^{\text{disc}}= 5\cdot 10^{-4}$, though of course it yields different release probabilities than plotted in \Cref{fig:guass-vs-snaps}.
\end{itemize}

To translate approximate RDP guarantees from our algorithm to approximate DP, we leverage \Cref{prop:approx-rdp-to-dp} as implemented by \cite{DPLib}. In particular, to match a $(\hat{\eps}, \hat{\delta})$-DP guarantee, we ensure our SNAPS mechanism satisfies $(\hat{\delta}/2, \alpha, \eps)$-RDP such that the $(\alpha, \eps)$-RDP guarantee gives $(\hat{\eps}, \hat{\delta}/2)$-DP. In other words, we use half our $\delta$ budget in RDP $\to$ DP conversion. 

In \Cref{fig:guass-vs-snaps}, we plot the release probabilities between the Gaussian mechanism and SNAPS for the above parameterizations. As predicted by the plot, we should expect the SNAPS experiment variants to always outperform the Gaussian variants, which is validated by the results in  \Cref{tab:experiment_results}, showing that SNAPS improves utility (as measured by the output size) by 10-20\% in all cases.

\begin{table}[htbp]
    \centering
    \begin{tabular}{l c c c c}
        \toprule
        \textbf{Dataset} & \textbf{PolicyGaussian} & \textbf{PolicySNAPS} & \textbf{MAD2R} & \textbf{MAD2R-SNAPS} \\
        \midrule
Reddit & 7161 {\tiny ($\pm$ 10)} & \underline{8486} {\tiny ($\pm$ 22)} & 6187 {\tiny ($\pm$ 22)} & \textbf{7161} {\tiny ($\pm$ 20)}\\
Wiki & 11467 {\tiny ($\pm$ 22)} & \underline{13793} {\tiny ($\pm$ 47)} & 10431 {\tiny ($\pm$ 42)} & \textbf{12170} {\tiny ($\pm$ 39)}\\
Twitter & 15814 {\tiny ($\pm$ 45)} & \underline{18407} {\tiny ($\pm$ 43)} & 13847 {\tiny ($\pm$ 29)} & \textbf{15739} {\tiny ($\pm$ 53)}\\
Finance & 20129 {\tiny ($\pm$ 31)} & \underline{22952} {\tiny ($\pm$ 30)} & 17695 {\tiny ($\pm$ 48)} & \textbf{19969} {\tiny ($\pm$ 33)}\\
Amazon & 77840 {\tiny ($\pm$ 95)} & \underline{89416} {\tiny ($\pm$ 71)} & 65273 {\tiny ($\pm$ 91)} & \textbf{73703} {\tiny ($\pm$ 107)}\\
IMDB & 3582 {\tiny ($\pm$ 21)} & \underline{4447} {\tiny ($\pm$ 18)} & 3829 {\tiny ($\pm$ 23)} & \textbf{4559} {\tiny ($\pm$ 17)}\\
        \bottomrule
    \end{tabular}
    \caption{Comparison of output size of partition selection algorithms, with $\eps = 1, \delta = 10^{-5}$, and $\Delta_0 = 100$. Results for each algorithm are averaged over 5 trials, with one standard deviation reported. The best parallel result is bolded, and the best sequential result is underlined.}
    \label{tab:experiment_results}
\end{table}
\section{Additive noise for partition selection}\label{sec:additive-partition-selection}

After exploring weighted partition selection in \Cref{sec:weighted-selection}, we now pivot back to the notion of \emph{optimal} partition selection. A fundamental limitation of our optimal algorithm defined by $\pi^*$ (\Cref{thm:general-opt}) is that, unlike the Laplace or Gaussian mechanism, it does not allow releasing the \emph{total weight} of released partitions along with the set of released partitions. We observe that all mechanisms which add noise to the weight vector and then perform thresholding as a post-processing step allow releasing the weight as a byproduct, as long as the additive noise alone suffices to guarantee privacy. This is formalized below.

\begin{definition}
An additive-noise partition selection mechanism $M_{P, \tau}(n)$ is 
parameterized by a truncated probability distribution $P$ with support\footnote{
Defining additive-noise partition selection mechanisms this way precludes using the standard discrete Laplace / Gaussian mechanism (i.e. they need to be truncated). This is done to both simplify our subsequent optimization technique, and also to naturally define the class of mechanisms as pure post-processing of additive noise. This is fortunately without much loss of generality for our purposes (optimality for $\Delta_0 = 1$), since truncation on the right followed by normalization preserves privacy and decreases the release probability by at most $\delta$.
} on $Y \subseteq (-\infty, \tau] \subseteq \mathbb{Z}$. $M_{P,\tau}(n)$ samples $Z \sim P$ and releases $n + Z$, then keeps the partition if $n + Z > \tau$. The associated partition selection primitive is $\pi_{P, \tau}(n) = \bbP_{Z \sim P}(n + Z > \tau)$.
\end{definition}

Since $\pi_{P, \tau}(0) = \bbP_{Z \sim P}(Z > \tau) = 0$, $\pi_{P, \tau}$ is a valid partition selection primitive. Crucially, it is also a \emph{post-processing} of an additive noise mechanism. As such, for privacy it suffices to bound the loss associated with releasing $Z + q(D)$ where $q(D)$ is a sensitivity 1 query. This approach is precisely what allows us to release the noisy count.

The property of releasing the total partition weight ``for free" is a valuable aspect of these additive noise mechanisms. For instance, a private GROUP BY followed by a SUM in private SQL can be implemented with a single additive mechanism which simultaneously releases the partitions and their sums. As such, it would be very desirable if additive noise can come close to matching the privacy of $\pi^*$ for the same level of ``utility''.

In the $\alpha \to \infty$ regime explored in $\cite{desfontaines-partition}$, the truncated discrete Laplace mechanism was shown to be nearly optimal in most regimes, and exactly optimal for certain settings of $\eps$ and $\delta$. In this section, we show that there is an unfortunate inherent separation in privacy between additive mechanisms and the optimal mechanism in the $\alpha < \infty$ regime.
Throughout section, we follow the same restrictions as \Cref{sec:main-result} (and \cite{desfontaines-partition}) and only consider the $\Delta_1 = 1$ case.



\subsection{Approximate RDP of truncated additive noise mechanisms}\label{sec:additive-rdp}

Given a discrete distribution $P$ over the integers with finite support, our goal is to evaluate the approximate RDP guarantee of $P$ as an additive noise mechanism. Recall from \Cref{sec:prelim} that $P_+$ is the distribution of $P$ shifted by one.

It is simple to see--by following definitions--that the additive noise mechanism is $\delta$-approximate $(\alpha, \eps)$-RDP for $\eps = \max\{\drp{P_+}{P}, \drp{P}{P_+} \}$, and this is tight.

Before we proceed to any specific optimization problems related to additive noise mechanisms, let us consider the generic problem of computing $\drp{P}{Q}$ for arbitrary distributions $P, Q$.
For this problem, we first observe that the exponentiation R{\'e}nyi divergence and approximate R{\'e}nyi divergence is convex.

\begin{lemma}\label{lem:renyi-convex}
Let $P$ and $Q$ be distributions over some finite set $U$. Then for $\alpha > 1$, $e^{(\alpha - 1) \dr{P}{Q}} = E_{x \sim Q}\left[\left(\frac{P(x)}{Q(x)}\right)^\alpha\right]$ is jointly convex in $P$ and $Q$. 
\end{lemma}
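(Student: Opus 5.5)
The approach is to reduce the statement to a pointwise fact. Expanding the definition (for finite $U$ the integral is a finite sum), we have
\[
e^{(\alpha-1)\dr{P}{Q}} = \sum_{x \in U} P(x)^\alpha Q(x)^{1-\alpha} = \sum_{x \in U} Q(x)\left(\frac{P(x)}{Q(x)}\right)^\alpha = \E_{x \sim Q}\left[\left(\frac{P(x)}{Q(x)}\right)^\alpha\right],
\]
which gives the stated identity. Here we use the conventions $0^\alpha \cdot 0^{1-\alpha} = 0$ and $a^\alpha 0^{1-\alpha} = \infty$ for $a>0$, matching the footnote's treatment of non-absolute-continuity. Since a sum of jointly convex functions is jointly convex, and $(P,Q)$ enters the $x$-th summand only through the coordinates $(P(x),Q(x))$, it suffices to show that $g(a,b) = a^\alpha b^{1-\alpha}$ is jointly convex on $[0,\infty)^2$, extended appropriately at $b=0$.

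The key step is to recognize $g$ as the perspective of $f(t) = t^\alpha$, namely $g(a,b) = b\, f(a/b)$ for $b>0$. For $\alpha > 1$, $f$ is convex on $[0,\infty)$, and the perspective of a convex function is jointly convex. I would include the short direct argument. Take $(a_1,b_1),(a_2,b_2)$ with $b_i>0$ and $\lambda\in[0,1]$, and set $b = \lambda b_1 + (1-\lambda) b_2$. Write
\[
\frac{\lambda a_1 + (1-\lambda)a_2}{b} = \frac{\lambda b_1}{b}\cdot\frac{a_1}{b_1} + \frac{(1-\lambda) b_2}{b}\cdot\frac{a_2}{b_2}.
\]
Applying convexity of $f$ with the weights $\lambda b_1/b$ and $(1-\lambda)b_2/b$, and then multiplying through by $b$, gives $g(\lambda(a_1,b_1) + (1-\lambda)(a_2,b_2)) \le \lambda g(a_1,b_1) + (1-\lambda) g(a_2,b_2)$.

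The main (minor) obstacle is the boundary $b=0$. There $g(0,0)=0$ and $g(a,0)=\infty$ for $a>0$, which is the lower-semicontinuous closure of the perspective. I would check the convexity inequality directly in the three possible cases:
\begin{itemize}
\item If both $b_i=0$, the combination also has $b=0$. The right-hand side is $\infty$ unless $a_1=a_2=0$, in which case both sides are $0$.
\item If exactly one $b_i=0$ with $a_i>0$, the right-hand side is $\infty$, so the inequality is trivial.
\item If exactly one $b_i=0$ with $a_i=0$, then $g(a_i,b_i)=0$, and the left-hand side equals $(\lambda_j a_j)^\alpha(\lambda_j b_j)^{1-\alpha} = \lambda_j g(a_j,b_j)$, where $j$ is the other index and $\lambda_j$ its weight.
\end{itemize}
Summing the pointwise convexity over $x\in U$ then yields joint convexity of $e^{(\alpha-1)\dr{P}{Q}}$.
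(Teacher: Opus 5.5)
Your proposal is correct and follows the same route as the paper. Both arguments reduce to the summand $a^\alpha b^{1-\alpha}$, identify it as the perspective $b\,f(a/b)$ of the convex function $f(t)=t^\alpha$, and handle the boundary $b=0$ with the conventions $g(0,0)=0$ and $g(a,0)=\infty$. You supply more detail than the paper, namely an explicit proof of perspective convexity and a case check at $b=0$, and you correctly write the perspective as $b\,f(a/b)$ where the paper's text has a typo.
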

\begin{proof}
Define $g(t) = t^\alpha$ which is convex for $\alpha > 1$ on $[0, \infty)$. The term inside the expectation is $f(u, v) = u^\alpha v^{1-\alpha}$. For $v > 0$ this is equal to $u \cdot g(u / v)$ which is the \emph{perspective function} of $g$ which is jointly convex in $v$ and $u$. Furthermore, this convexity is preserved for the edge case when $v = 0$ where $f(0, 0) = 0$ and $f(u, 0) = \infty$ for $u > 0$. The lemma statement follows due to the fact that expectations of jointly convex functions are jointly convex.
\end{proof}

\begin{proposition}\label{thm:approx-renyi-convex}
Let $P$ and $Q$ be distributions over some finite set $U$. Then for $\alpha > 1$, $e^{(\alpha -1 )\drp{P}{Q}}$ is jointly convex in $(P, Q, P', Q')$, where $P', Q'$ are the minimizing distributions in \Cref{def:approx-renyi-divergence}: $\drp{P}{Q} = \inf\left\{\dr{P'}{Q'} : P = (1-\delta)P' + \delta P'', Q=(1-\delta)Q' + \delta Q''\right\}
$.
\end{proposition}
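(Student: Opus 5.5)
The statement is somewhat informal, so I would first pin down a precise version and then prove that. The intended claim is that the quantity
\[
F(P,Q,P',Q') := E_{x \sim Q'}\left[\left(\frac{P'(x)}{Q'(x)}\right)^\alpha\right] = e^{(\alpha-1)\dr{P'}{Q'}}
\]
is jointly convex over the convex feasible set
\[
\mathcal{C} = \{(P,Q,P',Q') : P = (1-\delta)P' + \delta P'',\ Q = (1-\delta)Q' + \delta Q'' \text{ for some distributions } P'',Q''\}.
\]
Its infimum over $(P',Q')$ with $(P,Q)$ fixed is exactly $e^{(\alpha-1)\drp{P}{Q}}$, because $t \mapsto e^{(\alpha-1)t}$ is increasing for $\alpha>1$. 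So I would prove two things: convexity of $F$ on $\mathcal{C}$, and convexity of the partially minimized function $(P,Q)\mapsto e^{(\alpha-1)\drp{P}{Q}}$.

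The first step is to describe $\mathcal{C}$ by linear constraints, since $U$ is finite. Write $P'' = (P - (1-\delta)P')/\delta$ for $\delta>0$. Then $P''$ is a distribution if and only if $(1-\delta)P'(x) \le P(x)$ for all $x$; the normalization holds automatically because $P$ and $P'$ sum to one. The same holds for $Q''$. Adding nonnegativity and normalization of $P,Q,P',Q'$, the set $\mathcal{C}$ is a polytope in $\R^{4|U|}$ and hence convex. For $\delta=0$ the constraints become $P'=P$ and $Q'=Q$, which are also linear.

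The second step is convexity of $F$. By \Cref{lem:renyi-convex}, $F$ is jointly convex in $(P',Q')$, and it does not depend on $(P,Q)$. A function convex in some of its variables and constant in the others is jointly convex. Restricting it to the convex set $\mathcal{C}$ keeps it convex, with the convention that values of $+\infty$ outside the absolute-continuity region are allowed, as in \Cref{lem:renyi-convex}.

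The third step obtains the stated conclusion by partial minimization. If $G(a,b)$ is jointly convex on a convex set, then $g(a)=\inf_b G(a,b)$ is convex. The proof is short: take near-optimal $b_1,b_2$ for $a_1,a_2$. The point $(\lambda a_1+(1-\lambda)a_2,\ \lambda b_1+(1-\lambda)b_2)$ lies in $\mathcal{C}$, so
\[
g(\lambda a_1+(1-\lambda)a_2) \le \lambda G(a_1,b_1)+(1-\lambda)G(a_2,b_2).
\]
Applying this with $a=(P,Q)$ and $b=(P',Q')$ shows that $e^{(\alpha-1)\drp{P}{Q}}$ is jointly convex in $(P,Q)$. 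It is also jointly convex together with the minimizers, in the sense of the $\mathcal{C}$-formulation. This justifies solving for optimal additive noise with a convex program over $(P,P')$ and the shifted variables.

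The main obstacle is the interpretation, not the analysis. "Minimizing distributions" cannot be treated as a function of $(P,Q)$, because minimizers need not be unique or depend convexly on $(P,Q)$. I would therefore state the result in the lifted form above, convexity of $F$ over the polytope $\mathcal{C}$, and deduce the convexity of $e^{(\alpha-1)\drp{P}{Q}}$ from it. The only care needed is with the $+\infty$ boundary values and the $\delta=0$ degenerate case.
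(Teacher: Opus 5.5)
Your proposal is correct and follows essentially the same route as the paper. Both rewrite the feasible set as the linear constraints $0 \le P'(x) \le P(x)/(1-\delta)$ and $0 \le Q'(x) \le Q(x)/(1-\delta)$ with normalization, invoke \Cref{lem:renyi-convex} for joint convexity of $\sum_x P'(x)^\alpha Q'(x)^{1-\alpha}$, and conclude because partial minimization of a jointly convex function over a convex set is convex. Your lifted formulation, your check that $P''$ is a valid distribution exactly when $(1-\delta)P' \le P$, and your handling of the case $\delta = 0$ simply make precise what the paper leaves implicit.
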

\begin{proof}
Observe that by \Cref{def:approx-renyi-divergence}, $e^{(\alpha -1)\drp{P}{Q}}$ is the value of the following optimization problem:
\begin{align*}
    \text{minimize } & \sum_{x} P'(x)^\alpha Q'(x)^{1-\alpha} \\
    \text{subject to } & \sum P'(x) = \sum Q'(x) = 1, \\
                       & 0 \le P'(x) \le \frac{P(x)}{1-\delta}, \\
                       & 0 \le Q'(x) \le \frac{Q(x)}{1-\delta}.
\end{align*}
Let $f(P', Q') = \sum_{x} P'(x)^\alpha Q'(x)^{1-\alpha}$. By \Cref{lem:renyi-convex}, each term in the sum is jointly convex in $(P'(x), Q'(x))$, so $f$ is jointly convex in $(P', Q')$.
The constraints define a convex set in the joint space of $(P, Q, P', Q')$. Specifically, the inequality constraints are linear.
The function of interest is the partial minimization of a jointly convex function over a convex set, which preserves convexity.
\end{proof}

\paragraph{Fast Algorithm for Computing Approximate R{\'e}nyi Divergence.}
From \Cref{thm:approx-renyi-convex}, it is feasible to numerically solve for the approximate R{\'e}nyi divergence of two discrete distributions supported on a finite domain via standard convex optimization. However, there exists a greedy algorithm to compute this in $O(n \log n)$ time for any two discrete distributions $P$ and $Q$ with total support size $n$ using a ``water-filling" approach, as described below. 

Instead of $P', Q'$, it will be easier to describe an algorithm for finding $\tP = (1 - \delta)P'$ and $\tQ = (1 - \delta)Q'$. The algorithm works as follows:
\begin{enumerate}
    \item Compute $\tP$ by finding a cutoff $\lambda_P$ such that if we clip all likelihood ratios $\tP(x)/Q(x)$ to be at most $\lambda_P$, the total mass removed from $P$ is exactly $\delta$. More formally, find $\lambda_P$ such that
    $$
    \sum_x \tP(x) = 1-\delta \text{ for } \tP(x) = \min\{P(x), \lambda_P Q(x)\}
    $$
    \item Compute $\tQ$ by finding a cutoff $\lambda_Q$ such that if we clip all likelihood ratios $\tP(x)/\tQ(x)$ to be at least $\lambda_Q$, the total mass removed from $Q$ is exactly $\delta$. That is, find $\lambda_Q$ such that
    $$
    \sum_x \tQ(x) = 1-\delta \text{ for } \tQ(x) = \min\{Q(x), \tP(x)/\lambda_Q\}
    $$
\end{enumerate}

The approximate R{\'e}nyi divergence between $P$ and $Q$ is just the R{\'e}nyi divergence between the normalized versions of $P'$ and $Q'$. Note that $\lambda_P$ (resp. $\lambda_Q$) can be computed simply in $O(n \log n)$ time by iterating over the ratios $P(x) / Q(x)$ (resp. $\tP(x) / Q(x)$) in a sorted order, and manipulating the cumulative masses. 
More details and the proof of correctness are given in \Cref{app:waterfilling}.

It should be note that, remarkably, the distributions $P'$ and $Q'$ do \emph{not} depend on $\alpha$ at all. 

\subsection{Characterizing $\piopt$ as an additive noise mechanism.}

Recall that we have computed the optimal partition selection primitive $\pi^*$ in \Cref{sec:main-result}. In this section, we note that this mechanism can be viewed as an additive-noise partition selection strategy, and compute its privacy guarantees using the approach in the previous subsection.

\begin{definition}\label{def:bigpi}
For a given optimal partition selection primitive $\pi^*$, denote by $\Pi$ the discrete distribution with support on $\{0, \dots, n_d - 1\}$
where $n_d = \min\{n\in\mathbb{N} : \pi^*(n) = 1\}$ and
\begin{align*}
\Pi(x) = \pi^*(n_d - x) - \pi^*(n_d - 1 -x) & &\forall x \in \{0, \dots, n_d - 1\}
\end{align*}
\end{definition}
It is easy to see that $\pi^*$ is increasing and, for $\delta > 0$, $n_d$ is always finite (i.e. $n_d \leq \lceil 1/\delta \rceil$). Thus, $\Pi$ is a valid distribution. Furthermore, the additive-noise partition selection primitive from $\Pi$ with threshold $\tau = n_d - 1$ satisfies
\begin{align*}
\pi_{\Pi, \tau}(n) = \bbP_{Z \sim \Pi}(Z > \tau - n) = \sum_{x = \tau - n + 1}^{n_d - 1} \Pi(x) = \pi^*(n).
\end{align*}
In other words, the additive noise mechanism characterized by $\Pi$ \emph{precisely} matches the $\piopt$ strategy. Since $\piopt$ exactly characterizes $\Pi$ up to translation, no other additive-noise distribution can both 1) exactly match the utility of $\piopt$ and 2) improve privacy over $\Pi$.

We can numerically compute the privacy guarantee of the additive noise mechanism for $\Pi$ using the approach in \Cref{sec:additive-rdp}. \Cref{fig:additive-comp} shows there is a clear separation in privacy between $\piopt$ and $\Pi$
, which show a separation between any additive-noise partition selection strategies which \emph{exactly} match the selection probabilities of $\piopt$ in the $\alpha < \infty$ regime.

\subsection{Optimal additive noise with a single point-wise utility}

\begin{figure}[h]
\centering
\includegraphics[width=0.7\textwidth]{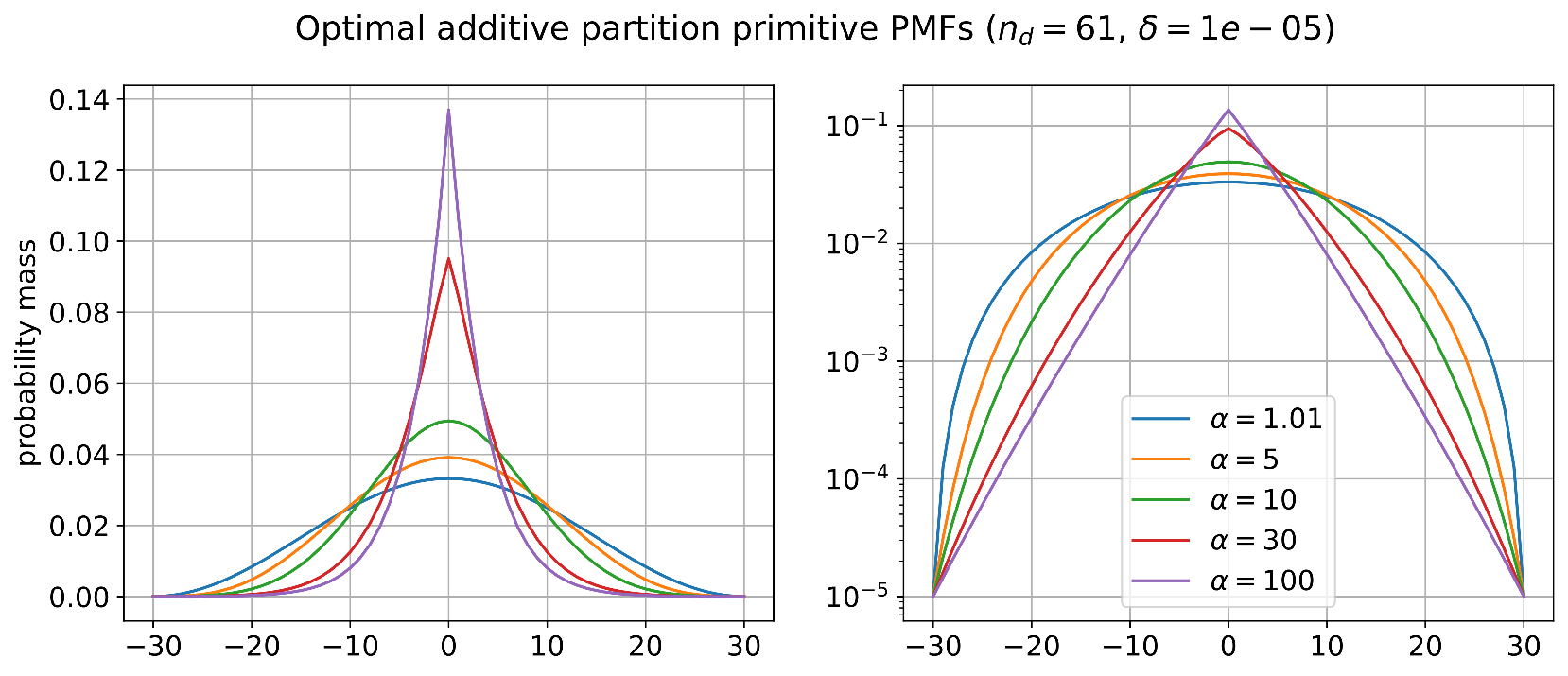}
\caption{Probability mass functions (centered at 0) for optimal additive noise distributions satisfying $\pi(61) = 1$ at various values of $\alpha$ as minimized by the convex program in \Cref{thm:opt-additive-simplified}. As $\alpha$ grows, the optimal distribution converges to a truncated discrete Laplace (\Cref{prop:converge-discrete-laplace}). At smaller $\alpha$ the optimal distribution becomes platykurtic, with a flatter peaks and thinner tails.}
\label{fig:additive-distributions}
\end{figure}

\begin{figure}[h]
\centering
\includegraphics[width=0.75\textwidth]{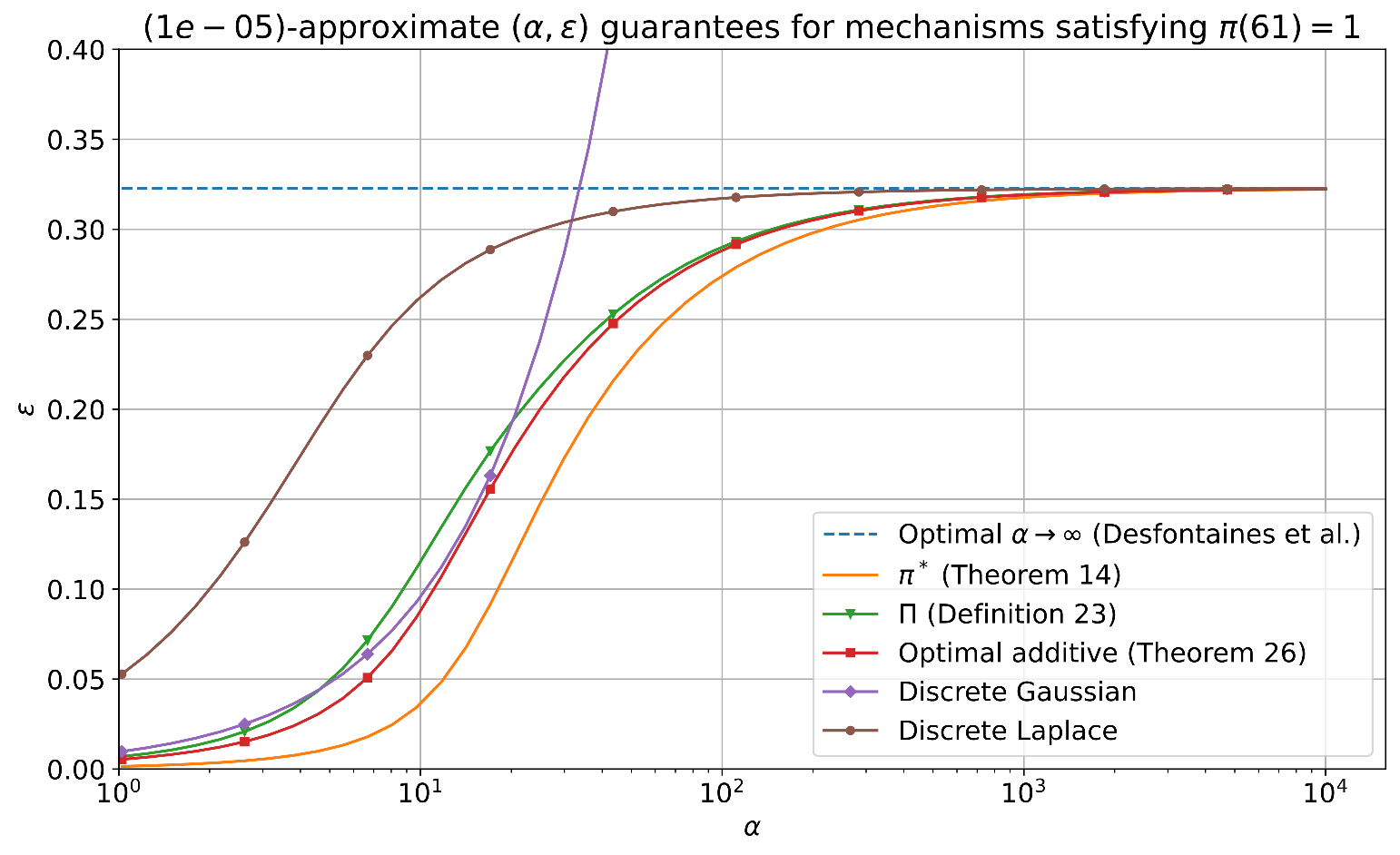}
\caption{The privacy of various mechanisms under the constraint that $\pi(61) = 1$ i.e. additive noise must be bounded in $[0, 60]$ (or equivalently $[-30, 30]$ due to translation invariance). $\pi^*$ clearly dominates all additive mechanisms for small and moderate $\alpha$. The (truncated) Gaussian and Laplace plots were computed by numerically solving for the scale parameters that ensure $f(-30)=f(30)=\delta$ in their respective PMFs after truncation and normalization.
}
\label{fig:additive-comp}
\end{figure}

In the previous section, we showed that there is a privacy gap if we \emph{exactly match} the selection probabilities of $\piopt$ with additive noise. In this section we will relax this to a specific point-wise guarantee: for a fixed $n_d \in \mathbb{N}$, we will find additive noise mechanisms which satisfy $\pi_{P,r}(n_d) = 1$ (i.e. those that release the partition with count at least $n_d$ deterministically\footnote{By restricting to $\pi(n_d) = 1$, we simplify the optimization problem substantially to a finite convex problem. In principle this can be extended to $\pi(n_d) = \gamma$ for any $\gamma > 0$, at some increase in complexity.}), while achieving the best possible privacy guarantee (minimizing $\eps$ for fixed $\delta, \alpha$).

Similar to \Cref{thm:approx-renyi-convex}, we can in fact find the optimal additive noise by formulating this as a convex optimization, but when $P$ is \emph{not} fixed apriori. The point-wise utility bound imposes a single constraint on $P$: that it has support only on $\{0, \dots, n_d - 1\}$. Under that constraint, it suffices to find the $P$ that minimizes the privacy objective $\max\{\drp{P}{P_+}, \drp{P_+}{P}\}$. This can be stated equivalently as follows.



\begin{align}
    \text{minimize } & \max\left\{\sum_{x} P_1'(x)^\alpha Q_1'(x)^{1-\alpha}, \sum_{x} Q_2'(x)^\alpha P_2'(x)^{1-\alpha}\right\} \label{eq:opt-additive}\\
    \text{subject to } & \sum P(x) = \sum P_i'(x) = \sum Q_i'(x) = 1, \nonumber \\
                       & 0 \le P_1'(x) \le \frac{P(x)}{1-\delta},\nonumber  \quad
                        0 \le Q_1'(x) \le \frac{P(x-1)}{1-\delta}.\nonumber\\ 
                       & 0 \le P_2'(x) \le \frac{P(x)}{1-\delta},\nonumber  \quad
                        0 \le Q_2'(x) \le \frac{P(x-1)}{1-\delta}.\nonumber 
\end{align}

Immediately we can show that under this optimization problem, we must have $P_1'(0) = Q_1'(0) = P_2'(n_d) = Q_2'(n_d) = 0$, otherwise either we explicitly violate the constraints, or the objective function is infinite. We can show even more structure to this problem though.

\begin{proposition}\label{prop:opt-symmetry2}
    The optimization problem to \Cref{eq:opt-additive} is symmetric in $P$. Namely, if $P$ is an optimal solution, then $\bar{P}$ defined by $\bar{P}(x) = P(n_d - 1 - x)$ is also an optimal solution.
\end{proposition}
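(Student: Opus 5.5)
The approach is to show that reflecting $P$ leaves the objective of \Cref{eq:opt-additive} unchanged. Write the objective as $F(P) = \max\{e^{(\alpha-1)\drp{P}{P_+}}, e^{(\alpha-1)\drp{P_+}{P}}\}$, where the inner minimizations over $(P_i', Q_i')$ implement the approximate R\'enyi divergences exactly as in \Cref{thm:approx-renyi-convex}. Define the reflection map $R$ on distributions over $\{0,\dots,n_d\}$ by $(RD)(x) = D(n_d - x)$, i.e.\ the change of variables $x \mapsto n_d - x$. Then $\bar{P}$ as defined in the statement equals $(RP)_+$ shifted back, since $\bar P(x) = P(n_d-1-x) = (RP)(x+1)$. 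I aim to show $F(\bar P) = F(P)$ for every feasible $P$. Because $\bar P$ is supported on $\{0,\dots,n_d-1\}$ whenever $P$ is, this gives optimality of $\bar P$ immediately.

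\textbf{Step 1 (invariance of the divergence).} For any bijection $\sigma$ of the support, $\drp{P\circ\sigma}{Q\circ\sigma} = \drp{P}{Q}$. The reason is that the decomposition $P = (1-\delta)P' + \delta P''$ transports to $P\circ\sigma = (1-\delta)P'\circ\sigma + \delta P''\circ\sigma$, and the sum $\sum_x P'(x)^\alpha Q'(x)^{1-\alpha}$ is merely reindexed. The same invariance holds under a common translation of both arguments. So the constraint sets and objective values correspond bijectively under relabeling.

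\textbf{Step 2 (reflection swaps the pair).} Compute the two shifted pairs. We have $\bar P(x) = P(n_d-1-x)$ and $\bar P_+(x) = \bar P(x-1) = P(n_d - x)$. Under the bijection $x \mapsto n_d - x$, $\bar P_+$ becomes $y \mapsto P(y)$ and $\bar P$ becomes $y \mapsto P(y-1) = P_+(y)$. Step 1 then gives
\[
\drp{\bar P}{\bar P_+} = \drp{P_+}{P}, \qquad \drp{\bar P_+}{\bar P} = \drp{P}{P_+}.
\]
The two terms inside the max therefore swap, and since the max is symmetric, $F(\bar P) = F(P)$. To carry this through at the level of the program variables, one maps a feasible tuple $(P_1',Q_1',P_2',Q_2')$ for $P$ to the tuple for $\bar P$ obtained by reflecting each component and exchanging the roles of indices $1$ and $2$. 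The box constraints are preserved because they also reflect and exchange in the same way.

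\textbf{Main obstacle.} The only delicate part is bookkeeping: matching the program's labeling of $P_i'$ versus $Q_i'$ (including the apparent typo in the second block's bounds) to the correct orientation of the divergence. The endpoint conventions, namely the zeros at $0$ and $n_d$, must also map to each other under $x\mapsto n_d - x$. I would handle this by arguing at the level of $\drp{\cdot}{\cdot}$ using Step 1, rather than at the level of the raw variables, which sidesteps the indexing issue.
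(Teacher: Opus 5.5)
Your argument is correct and is essentially the paper's proof. The paper uses the same reflection $x \mapsto n_d - x$, written out on the program variables as $\bar{P_1}'(x) = Q_2'(n_d - x)$ and $\bar{Q_1}'(x) = P_2'(n_d - x)$ (and symmetrically for index $2$), so that the two terms of the max swap exactly as in your Step~2.

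One small correction to your bookkeeping remark: the second block is not a typo. There, $Q_2'$ is the piece of $P_+$ and carries exponent $\alpha$, so it encodes $\drp{P_+}{P}$. The correct variable map therefore swaps the letters $P'/Q'$ as well as the indices, which your divergence-level Step~2 handles automatically.
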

\begin{proof}
Let $(P, P_1', Q_1', P_2', Q_2')$ be an optimal solution. Define $\bar{P}(x) = P(n_d - 1 - x)$ and 
    \begin{align*}
    \bar{P_1}'(x) &= Q_2'(n_d - x) \qquad
    \bar{Q_1}'(x) = P_2'(n_d - x) \\
    \bar{P_2}'(x) &= Q_1(n_d - x) \qquad
    \bar{Q_2}'(x) = P_1'(n_d - x).
    \end{align*}

The new objective on our transformed solution becomes
\begin{align*}
    &\max\left\{\sum_{x} \bar{P_1}'(x)^\alpha \bar{Q_1}'(x)^{1-\alpha}, \sum_{x} \bar{Q_2}'(x)^\alpha \bar{P_2}'(x)^{1-\alpha}\right\} \\
    &= \max\left\{\sum_{x} Q_2'(n_d - x)^\alpha P_2'(n_d - x)^{1-\alpha}, \sum_{x} P_1'(n_d - x)^\alpha Q_1(n_d - x)^{1-\alpha}\right\}\\
    &= \max\left\{\sum_{y} Q_2'(y)^\alpha P_2'(y)^{1-\alpha}, \sum_{y} P_1'(y)^\alpha Q_1(y)^{1-\alpha}\right\},
\end{align*}
which is identical to the original objective. It suffices to verify that the constraints are satisfied.

\begin{align*}
\bar{P_1}'(x) = Q_2'(n_d - x) \le \frac{P(n_d - 1 - x)}{1 - \delta} = \frac{\bar{P}(x)}{1 - \delta}\\
\bar{Q_1}'(x) = P_2'(n_d - x) \le \frac{P(n_d - x)}{1 - \delta} = \frac{\bar{P}(x - 1)}{1-\delta},
\end{align*}
and similarly the constraints on $(\bar{P_2}', \bar{Q_2}')$ hold by symmetry.
\end{proof}

When our problem is symmetrical, it gives us even more structure to exploit.

\begin{lemma}\label{lem:ardp-symmetric}
Let $P$ be a symmetric distribution and $Q = P + 1$ its shift by one. Then
$
D_\alpha^\delta(P || Q) = D_\alpha^\delta(Q || P).
$
\end{lemma}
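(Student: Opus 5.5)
The plan is to exhibit a single measure-preserving bijection of the underlying space that swaps the roles of $P$ and $Q$, and then check that both the R\'enyi divergence and the $\delta$-decomposition in \Cref{def:approx-renyi-divergence} are invariant under relabeling outcomes.

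\textbf{The reflection.} Let $P$ be symmetric about some center $c$, i.e.\ $P(x) = P(2c - x)$ for all $x$; in the setting of \Cref{prop:opt-symmetry2} this is $c = (n_d-1)/2$. Then $Q(x) = P(x-1)$ is symmetric about $c+1$. Define the reflection $\sigma(x) = 2c + 1 - x$, which is an involution of $\Z$ (or of the half-integer lattice, if needed). Using the symmetry of $P$, we check
\[
P(\sigma(x)) = P(2c+1-x) = P(x-1) = Q(x),
\]
and likewise
\[
Q(\sigma(x)) = P(2c - x) = P(x).
\]
Thus $P \circ \sigma = Q$ and $Q \circ \sigma = P$, so the pushforward of $P$ under $\sigma$ is $Q$ and the pushforward of $Q$ is $P$.

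\textbf{Invariance of the divergence.} R\'enyi divergence is invariant under applying the same bijection to both arguments, since
\[
\sum_x A(\sigma(x))^\alpha B(\sigma(x))^{1-\alpha} = \sum_y A(y)^\alpha B(y)^{1-\alpha}
\]
by reindexing $y = \sigma(x)$. The approximate version is invariant as well. Take any feasible decomposition
\[
P = (1-\delta)P' + \delta P'', \qquad Q = (1-\delta)Q' + \delta Q''.
\]
Composing every piece with $\sigma$ gives a feasible decomposition
\[
Q = (1-\delta)(P'\circ\sigma) + \delta(P''\circ\sigma), \qquad P = (1-\delta)(Q'\circ\sigma) + \delta(Q''\circ\sigma),
\]
and it satisfies $\dr{P'\circ\sigma}{Q'\circ\sigma} = \dr{P'}{Q'}$. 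The map is an involution, so it is a bijection between the feasible sets of $\drp{P}{Q}$ and $\drp{Q}{P}$ that preserves the objective. Hence the two infima agree, which gives $D_\alpha^\delta(P\|Q) = D_\alpha^\delta(Q\|P)$.

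\textbf{Expected obstacle.} The only delicate point is bookkeeping. We must make sure that the center of symmetry is used consistently, in particular that $\sigma$ maps the support correctly when $n_d$ is even and $c$ is a half-integer, and that the infinite-divergence edge cases are handled by the same relabeling. Both issues are immediate, because $\sigma$ is a genuine bijection of the integers whenever $2c+1 \in \Z$, and it is applied to both arguments simultaneously.
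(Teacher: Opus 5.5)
Your proof is correct and follows the same route as the paper: the paper also uses the reflection $y = 1 - x$ (after centering $P$ at $0$), which is your $\sigma(x) = 2c+1-x$, to swap $P$ and $Q$ as a relabeling of outcomes. Your version is slightly more careful than the paper's. You check explicitly that the $\delta$-decompositions correspond bijectively under the relabeling, and you avoid the paper's implicit assumption that the center of symmetry is an integer.
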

\begin{proof}
The proof follows from showing that $(P, Q)$ is identical to $(Q, P)$ under a change of variables, i.e. a re-labeling of outcomes. Assume without loss of generality (up to a shift in coordinates) that $P$ is symmetric around 0 and $P(x) = P(-x)$. Now consider the change of variable $y = 1 - x$. Then we can show a bijection
\begin{align*}
P(y) &= P(1-x) = P(x - 1) = Q(x)\\
Q(y) &= P(-x) = P(x) &\qedhere
\end{align*}
\end{proof}

The above results allow us to simplify \Cref{eq:opt-additive} substantially.

\begin{theorem}\label{thm:opt-additive-simplified}
    For a fixed $\alpha > 1, \delta \in (0, 1), n_d \in \mathbb{N}$, let $P, P'$ be distributions on $\{0, \dots, n_d - 1\}$ and $\{1, \dots, n_d - 1\}$ respectively which  minimize the following convex program:
\begin{align*}
    \text{minimize } & \sum_{x} P'(x)^\alpha P'(n_d-x)^{1-\alpha}\\
    \text{subject to } & P(x) \ge 0,\\
    & \sum_x P(x) = \sum_x P'(x) = 1,\\
    &0 \le P'(x) \le \frac{P(x)}{1-\delta}\\
    & P(x) = P(n_d - 1 - x)\\
\end{align*}
Then $P$ minimizes $\eps$ in the $(\delta, \alpha, \eps)$-RDP guarantee among all distributions $X$ which satisfy $\pi_X(n_d) = 1$.
\end{theorem}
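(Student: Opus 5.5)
The plan is to start from the exact convex program \eqref{eq:opt-additive}, which already characterizes the optimal $\eps$ over all additive distributions $X$ with $\pi_X(n_d)=1$. The constraint $\pi_X(n_d)=1$ is, up to translation, the same as requiring support on $\{0,\dots,n_d-1\}$. I would then reduce \eqref{eq:opt-additive} in three steps: (i) restrict to symmetric $P$; (ii) collapse the $\max$ of two terms into one; (iii) identify the $Q'$ variable with the reflection of $P'$.

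\textbf{Step (i): symmetrize $P$.} Take an optimal tuple $(P,P_1',Q_1',P_2',Q_2')$. By \Cref{prop:opt-symmetry2}, its reflected tuple $(\bar P,\dots)$ is also optimal. The objective is a maximum of two sums, and each sum is jointly convex in its variables by \Cref{lem:renyi-convex}, so the maximum is convex. The constraints are linear. Hence the midpoint of the two tuples is feasible and has objective no larger than the optimum, so it is optimal too. Its $P$-component $\frac12(P+\bar P)$ satisfies $P(x)=P(n_d-1-x)$, which justifies adding this symmetry constraint.

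\textbf{Step (ii): drop the $\max$.} For symmetric $P$, \Cref{lem:ardp-symmetric} gives $\drp{P}{P_+}=\drp{P_+}{P}$. So it suffices to minimize $\drp{P_+}{P}$, i.e.\ the single term $\sum_x P'(x)^\alpha Q'(x)^{1-\alpha}$, with $P'\le P/(1-\delta)$ and $Q'(x)\le P(x-1)/(1-\delta)$. Symmetry gives $P(x-1)=P(n_d-x)$. Therefore the reflection $r(x)=n_d-x$ maps the feasible set for $P'$ (a subset of $\{1,\dots,n_d-1\}$ after discarding the forced zeros) onto the feasible set for $Q'$.

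\textbf{Step (iii), the main obstacle: show one may take $Q'=P'\circ r$.} Substituting $Q'=P'\circ r$ gives exactly the program in the statement. This substitution is feasible, so the restricted program's value is at least the true optimum. The real work is the reverse inequality.

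My first attempt uses the water-filling characterization from \Cref{sec:additive-rdp} (detailed in \Cref{app:waterfilling}). There the minimizing $(\tP,\tQ)$ does not depend on $\alpha$ and is obtained by two clipping steps: step 1 clips the largest likelihood ratios $P_+(x)/P(x)$, and step 2 clips the smallest ratios of $\tP$ to the second argument. Under $r$, the ratio at $r(x)$ is the reciprocal of the ratio at $x$. Consequently, removing $\delta$ mass from the top ratios of the first distribution corresponds, after reflection, to removing $\delta$ mass from the bottom ratios of the second. I would verify that the two clipping thresholds satisfy $\lambda_Q = 1/\lambda_P$, and that the resulting pair satisfies $\tQ=\tP\circ r$. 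The delicate point is that step 2 is computed relative to the already-clipped $\tP$ rather than $P$. I would need to check that the ratios moved in step 1 never interact with those moved in step 2, presumably because a clipped ratio is at least $\lambda_P$ while the ratios in question are at most $1/\lambda_P$, with $\lambda_P\ge 1$.

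If that interaction argument fails, the fallback is an exchange argument. Given an optimal pair $(P',Q')$, compare it with the pair $(Q'\circ r,\,P'\circ r)$, which is feasible and whose objective is the reverse divergence. Then use the equality of forward and reverse optimal values from \Cref{lem:ardp-symmetric}, together with joint convexity, to average toward a reflection-invariant optimizer. Finally, since $P$ and $P'$ are obtained from a convex program, convexity of the stated program follows as in \Cref{thm:approx-renyi-convex}.
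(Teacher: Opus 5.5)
Your steps (i) and (ii) match the paper's. In step (iii), your primary water-filling route differs from the paper's, and it is correct.

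The paper applies the swap-and-reflect map $(P_1',Q_1')\mapsto(Q_1'\circ r,\,P_1'\circ r)$ to an optimal pair. It asserts that the objective is unchanged ``by \Cref{lem:ardp-symmetric}'' and then averages by convexity. As you noticed, that map turns $\sum_x P_1'(x)^\alpha Q_1'(x)^{1-\alpha}$ into the reverse sum $\sum_x Q_1'(x)^\alpha P_1'(x)^{1-\alpha}$ at the same pair. \Cref{lem:ardp-symmetric} only equates the two optimal values, so the averaging step tacitly needs the optimal pair to be reverse-optimal as well.

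Your fallback, as stated, has the same hole. Averaging only bounds the objective by $\tfrac12\bigl(v^*+\sum_x Q'(x)^\alpha P'(x)^{1-\alpha}\bigr)$, which is not a priori at most $v^*$.

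The water-filling route closes this, and your guess about the delicate point is right. Write $P_+(x)=P(x-1)$. Symmetry of $P$ gives $P\circ r=P_+$ and $P_+\circ r=P$. Taking $\lambda_Q=1/\lambda_P$ and using $\lambda_P\ge 1$,
\[
\tQ(x)=\min\{P_+(x),\lambda_P\tP(x)\}=\min\{P_+(x),\lambda_P P(x),\lambda_P^2P_+(x)\}=\min\{P_+(x),\lambda_P P(x)\}=\tP(r(x)).
\]
The inactive $\lambda_P^2P_+$ term is exactly your ``no interaction'' observation.

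\begin{itemize}
\item The total mass is $\sum_x\tQ(x)=\sum_x\tP(x)=1-\delta$, so $1/\lambda_P$ is a valid second cutoff.
\item The KKT certificate in \Cref{app:waterfilling} only uses $\lambda_Q\le\lambda_P$ and the two mass constraints. It therefore shows that $(\tP,\tP\circ r)$ is optimal.
\item The case $\sum_x\min\{P(x),P_+(x)\}\ge 1-\delta$, which the appendix excludes, is handled by $P'=Q'\propto\min\{P,P_+\}$. This choice is also $r$-invariant.
\end{itemize}

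Note that $\tQ=\tP\circ r$ makes the forward and reverse sums coincide at this pair, so the pair is also reverse-optimal. That is exactly the fact that would make the paper's averaging (and your fallback) valid.

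In short, your route depends on the appendix but yields an explicit reflection-invariant optimizer for every symmetric $P$. The paper's route is shorter, but as written it rests on an objective invariance that holds at the optimal pair, not at arbitrary feasible pairs.
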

\begin{proof}

By \Cref{prop:opt-symmetry2} and convexity, there must exist an optimal solution to \Cref{eq:opt-additive} that is symmetric. Furthermore, by \Cref{lem:ardp-symmetric} it suffices to consider only one of the terms in the objective function.

We will show that the remaining problem is invariant upon reflecting and shifting $P_1'$ and $Q_1'$ in \Cref{eq:opt-additive}. Consider the following transformations:
    \begin{align*}
    \bar{P}'(x) &= Q_1'(n_d - x)\\
    \bar{Q}'(x) &= P_1'(n_d - x)
    \end{align*}
These map to the original problem constraints since  $P'(0) = Q'(0) = P'(n_d) = Q'(n_d) = 0$ is implied for any feasible solution due to the symmetry of $P$. Substituting $y = n_d - x$ yields

\begin{align*}
0 \le \bar{P}'(x) \le \frac{P(x)}{1- \delta} &\iff 0 \le Q_1'(y)\le \frac{P(y-1)}{1 - \delta} \\
0 \le \bar{Q}'(x) \le \frac{P(x - 1)}{1- \delta} &\iff 0 \le P_1'(y)\le \frac{P(y)}{1 - \delta}.\\
\end{align*}

Thus, if $(P, P_1', Q_1')$ is feasible, then so is $(P,  \bar{P}', \bar{Q}')$. Similarly, the objective function is invariant under the same transformation by \Cref{lem:ardp-symmetric}. Now, consider averaging our optimal $(P, P_1', Q_1')$ and $(\bar{P},  \bar{P}', \bar{Q}')$ to generate $(P, \tilde{P'}, \tilde{Q'})$. Due to convexity, this will also be optimal. Furthermore,

\begin{align*}
\tilde{P'}(x) &= \frac{1}{2}(P_1'(x) + \bar{P'}(x)) = \frac{1}{2}(P_1'(x) + Q_1'(n_d - x))\\
\tilde{Q'}(x) &= \frac{1}{2}(Q_1'(x) + \bar{Q'}(x)) =\frac{1}{2}(Q_1'(x) + P_1'(n_d - x)).
\end{align*}

Therefore $\tilde{Q'}(n_d - x) = \frac{1}{2}(Q_1'(n_d - x) + P_1'(n_d - (n_d - x))) = \tilde{P'}(x)$.

This allows us to further eliminate $Q_1$ from \Cref{eq:opt-additive} by replacing it with a shifted and reflected version of $P_1$.
\end{proof}

While the optimization problem in \Cref{thm:opt-additive-simplified} is convex, its objective is not numerically stable for large values of $\alpha$. In practice, we perform the optimization in log-probabilities (which is not convex), and certify optimality by measuring the duality gap (using complementary slackness) of the convex objective in linear space. \Cref{fig:additive-distributions} plots optimal distributions at various values of $\alpha$. The plots suggest that as $\alpha \to \infty$, the optimal distribution for $\pi(n_d) = 1$ converges to truncated discrete Laplace. We prove this convergence below, which aligns with the findings in \cite{desfontaines-partition}.

Furthermore, we plot the \emph{privacy} of the optimal additive mechanism (for a specific point-wise utility guarantee) against $\pi^*$ in \Cref{fig:additive-comp}, showing a numerical separation in privacy even in this weaker setting where the additive noise needs to compete with $\pi^*$ only at a single point $n_d$.

\begin{proposition}\label{prop:converge-discrete-laplace}
For a fixed odd $n_d > 2$ and $\delta \in (0, 1/n_d]$, the unique distribution $X$ supported on $\{0, \dots, n_d - 1\}$ that minimizes $\eps$ subject to $(\eps, \delta)$-DP for $\pi_X(n_d) = 1$ is a truncated discrete Laplace distribution:
$$
P(x) = A \cdot e^{-\eps |x - \mu|}
$$
where $\mu = \frac{n_d - 1}{2}$ and $A$ is the constant such that $\sum P(x) = 1$. 

Furthermore, when $\delta \in (1/n_d, 1]$, the minimum $\eps$ is 0 which can be achieved when $X$ is the uniform distribution. (However, this is not the unique distribution achieving $(0, \delta)$-DP.)
\end{proposition}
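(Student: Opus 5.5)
The plan is to turn the $(\eps,\delta)$-DP constraint between $P$ and its shift $P_+$ into a chain of prefix and suffix mass bounds. Summing these bounds caps the total mass $1$ by an explicit increasing function of $\eps$. The truncated discrete Laplace attains every bound with equality, which gives optimality, and the equality cases give uniqueness. Throughout, write $\mu = (n_d-1)/2$, which is an integer because $n_d$ is odd.

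\emph{Step 1 (prefix and suffix bounds).} Let $S_k = \{0,\dots,k\}$. Since $P_+(S_k) = P(S_{k-1})$, with $P(S_{-1}) = 0$, the DP condition applied to the event $S_k$ gives
$$P(S_k) \le e^{\eps} P(S_{k-1}) + \delta .$$
Induction on $k$ then yields $P(S_k) \le \delta\sum_{j=0}^{k} e^{j\eps}$. Symmetrically, let $T_k = \{k,\dots,n_d-1\}$. Applying the DP condition in the other direction, with $P(T_k) = P_+(T_{k+1})$, gives $P(T_k) \le e^{\eps}P(T_{k+1}) + \delta$. Hence
$$P(T_{n_d-1-k}) \le \delta\sum_{j=0}^{k} e^{j\eps}.$$

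\emph{Step 2 (summing the bounds).} Split the support as $\{0,\dots,n_d-1\} = S_\mu \cup T_{\mu+1}$. Apply the prefix bound with $k=\mu$ and the suffix bound with $k=\mu-1$ to get
$$1 \le g(\eps) := \delta\Big(\sum_{j=0}^{\mu} e^{j\eps} + \sum_{j=0}^{\mu-1} e^{j\eps}\Big).$$
The function $g$ is continuous and strictly increasing, with $g(0) = n_d\delta$.
\begin{itemize}
\item If $\delta \le 1/n_d$, then $g(0) \le 1$, so there is a unique $\eps^\star \ge 0$ with $g(\eps^\star) = 1$, and every feasible $P$ needs $\eps \ge \eps^\star$.
\item If $\delta > 1/n_d$, the bound is vacuous at $\eps = 0$. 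The uniform distribution achieves $(0,\delta)$-DP, because its hockey-stick divergence to its shift is exactly the single unmatched atom, of mass $1/n_d < \delta$. Any distribution putting mass at most $\delta$ on each endpoint and close enough to uniform also works, so uniqueness fails there.
\end{itemize}

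\emph{Step 3 (achievability and uniqueness).} Take $P(x) = A e^{-\eps^\star|x-\mu|}$ with $A e^{-\eps^\star\mu} = \delta$. Then $P(x) = \delta e^{x\eps^\star}$ for $x \le \mu$, and the mirrored formula holds for $x \ge \mu$. Its total mass is exactly $g(\eps^\star) = 1$, so $A$ is the normalizing constant.

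For privacy, compute the hockey-stick divergence $\sum_x [P(x) - e^{\eps^\star}P_+(x)]_+$. The positive parts come from:
\begin{itemize}
\item $x=0$, contributing $P(0) = \delta$;
\item $x \le \mu$, where $P(x) - e^{\eps^\star}P(x-1) = 0$;
\item $x \ge \mu+1$, where $P$ is decreasing, so these terms contribute nothing.
\end{itemize}
The total is therefore $\delta$, and the reverse direction follows by symmetry, as in \Cref{lem:ardp-symmetric}.

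For uniqueness, any optimal $P$ at $\eps^\star$ must make every inequality in the two chains tight. This forces $P(S_k) = \delta\sum_{j\le k} e^{j\eps^\star}$ for all $k \le \mu$ and the mirrored equalities on the right. Differencing consecutive equalities gives $P(x) = \delta e^{x\eps^\star}$ pointwise, which is exactly the Laplace form.

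The step I expect to need the most care is choosing the right event family in Step 1. It must give a bound that is tight for the Laplace distribution on both halves simultaneously, and the middle atom $\mu$ must be counted exactly once: in $S_\mu$ and not in $T_{\mu+1}$. Oddness of $n_d$ is what makes this split symmetric and the optimizer unique.
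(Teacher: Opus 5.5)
Your proof is correct, but the key inequality differs from the paper's.

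\textbf{The paper's route.} The paper weights the hockey-stick terms by $e^{-\teps x}$ so that they telescope to a \emph{pointwise} bound $\tP(i) \le \delta e^{\teps i}$. After reducing to a symmetric $\tP$ via \Cref{prop:opt-symmetry2}, this becomes $\tP(x) \le \delta e^{\teps(\mu - |x-\mu|)}$ for every $x$. Summing then rules out $\teps < \eps^*$. At $\teps = \eps^*$ the pointwise domination $\tP \le P^*$ gives uniqueness in one line.

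\textbf{Your route.} You apply the plain event form of DP to prefix and suffix intervals and obtain only \emph{cumulative} bounds $P(S_k) \le \delta \sum_{j \le k} e^{j\eps}$. These are weaker than the paper's: they follow from the pointwise bounds by summation, but not conversely. Still, adding just the two midpoint bounds gives $1 \le g(\eps)$, and your $g$ is exactly the paper's normalizer $Z$.

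\textbf{Trade-offs.} The cost is the extra backward induction for uniqueness: tightness of the final bound forces every link in the chain to be tight, and differencing then recovers the pointwise form. In return, your argument is more elementary. It uses both directions of the DP constraint directly instead of symmetrizing, so non-symmetric competitors in the uniqueness claim are handled explicitly. The paper's ``without loss of generality symmetric'' step leaves that case implicit. You also verify the reverse hockey-stick divergence and the non-uniqueness for $\delta > 1/n_d$ explicitly.

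\textbf{Two small remarks.} First, the identity $P(T_k) = P_+(T_{k+1})$ holds only if $T_{k+1}$ is read as $\{k+1, \dots, n_d\}$, because $P_+$ has an atom at $n_d$. With that reading, your inequality $P(T_k) \le e^{\eps}P(T_{k+1}) + \delta$ is correct. Second, oddness of $n_d$ is used only to make $\mu$ an integer for your split. It is not what makes the optimizer unique: for even $n_d$, the same argument split at $n_d/2$ still yields a unique optimizer.
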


In the following proof, we use $[x]_+$ as a shorthand for $\max\{x, 0\}$. Furthermore, recall that the $(\eps, \delta)$-DP can be characterized by the hockey stick divergence, which will be more convenient for us to work with. Namely, let
\begin{align*}
\hs{P}{Q}{\eps} = \int_{\Omega} \Brac{P(x) - e^{\eps} \cdot Q(x)}_+ dx.
\end{align*}
It is known that $D^{\delta}_{\infty}(P \| Q) \leq \eps$ if and only if $\hs{P}{Q}{\eps} \leq \delta$. We will use the latter formulation in the proof below.

\begin{proof}
We will first consider the case $\delta \in (0, 1/n_d]$. 
Define $Z(\eps) = \delta  \sum_{x=0}^{n_d - 1} e^{\eps (\mu - |x - \mu|)}$. Let $\eps^*$ be the unique\footnote{The uniqueness follows from monotonicity of $Z(\eps)$, $Z(0) = n_d \delta \leq 1$ and $\lim_{\eps \to \infty} Z(\eps) = \infty$.} non-negative real number such that $Z(\eps^*) = 1$. Finally, let $P^*$ denote the truncated discrete Laplace distribution supported on $\{0, \dots, n_d - 1\}$ with $P^*(x) = \delta \cdot e^{\eps^* (\mu - |x - \mu|)}$. This is a valid distribution since $Z(\eps^*) = 1$, and it is simple to check that $\hs{P^*}{P^*_+}{\eps^*} = \delta$, implying that the resulting additive mechanism is $(\eps^*, \delta)$-DP as desired.

Suppose for the sake of contradiction that there is some other distribution $\tP$ supported on $\{0, \dots, n_d - 1\}$ that satisfies $(\teps, \delta)$-DP for some $\teps < \eps$. By \Cref{prop:opt-symmetry2}, we may assume without loss of generality that $\tP$ is symmetric around $\mu$. Then, consider any $i \in \{1, \dots, n_d - 1\}$. Since $\tP$ satisfies $(\teps, \delta)$-DP, we have
\begin{align*}
\delta \geq \hs{\tP}{\tP_+}{\teps} &= \sum_{x \in \{0, \dots, n_d - 1\}} \Brac{\tP(x) - e^{\teps} \cdot \tP(x - 1)}_+  \\
&\geq \sum_{x \in \{0, \dots, i\}} e^{-\teps x} \cdot \Brac{\tP(x) - e^{\teps} \cdot \tP(x - 1)}_+  \\
&\geq \sum_{x \in \{0, \dots, i\}} e^{-\teps x} \cdot \Paren{\tP(x) - e^{\teps} \cdot \tP(x - 1)}  \\
&= e^{-\teps x} \cdot \tP(x). 
\end{align*}
From this and from symmetry of $\tP$, we have
\begin{align} \label{eq:point-wise-bound-dlap}
\tP(x) \leq \delta \cdot \min\{e^{\teps x}, e^{\teps (n_d - 1 - x)}\} = \delta \cdot e^{\teps \left(\mu - |x - \mu|\right)} < P^*(x).
\end{align}

However, this means that
\begin{align*}
\sum_{x \in \{0, \dots, n_d - 1\}} \tP(x) <  \sum_{x \in \{0, \dots, n_d - 1\}} P^*(x) = 1,
\end{align*}
which is a contradiction since $\tP$ is a distribution. 

To see the uniqueness of $P^*$ for $(\eps^*, \delta)$-DP, notice that, when $\teps = \eps^*$, \eqref{eq:point-wise-bound-dlap} becomes $\tP(x) \leq P^*(x)$ for all $x$. Since both $\tP$ and $P^*$ are distributions, this implies that they must be the same.

For the case of $\delta \in (1/n_d, 1]$, it is simple to verify that the uniform distribution satisfies $(\eps, \delta)$-DP for $\eps = 0$, and this is the smallest $\eps$ possible.
\end{proof}

We end this section by remarking that noise with bounded support has applications in DP beyond partition selection (e.g. in multi-party DP~\cite{BGGKMRS22}). Thus, the tools developed here might be useful elsewhere; we leave this to future work.
\section{Conclusion}

This work continues the direction started by \cite{desfontaines-partition} in deriving \emph{optimal} mechanisms for partition selection. We generalize their result to approximate RDP, handling finite values of $\alpha$. When users only submit a single partition, this result alone is of practical utility in cases where the mechanism will be composed many times due to the tighter composition properties of (approximate) RDP.

In the finite $\alpha$ regime we also show that there is a separation in privacy between mechanisms which are post-processing of additive noise mechanisms and those which are not; there is an inherent cost to ``releasing the weight" of a partition. This differs from the $\alpha \to \infty$ regime as noted by \cite{desfontaines-partition}, and implies that if the partition weight is not needed, analysts and mechanism designers should strongly consider using non-additive mechanisms to maximize utility.

Furthermore, we use our recipe for the optimal mechanism to create a selection mechanism targeting $L^r$ norm constraints on vector contributions. When $r = 1$ it gives us a drop-in replacement for the Laplace mechanism which is used as a subroutine in \cite{Carvalho2022IncorporatingIF}. When $r=2$ it gives us a drop-in replacement for the Gaussian mechanism which is used as a subroutine in \cite{swanberg2023dp}, \cite{gopi20a} and \cite{chen2025scalable}. When injecting SNAPS into the latter two algorithms, we show improved utility across the board over a variety of datasets.

There are several interesting open questions left by our work, as highlighted below.
\begin{itemize}
\item \textbf{Beyond Drop-In Replacement:} Our focus has been to devise partition selection primitives that can be used as drop-in replacements for additive noise mechanisms, such as the (truncated) Gaussian and Laplace. However, some sophisticated partition selection mechanisms--including those used in our experiments~\cite{chen2025scalable,gopi20a}--operate in multiple stages, which are more tailored towards $L^r$ sensitivity. It would be interesting to see if these intermediate steps can be designed in conjunction with the optimal partition selection primitive for better compatability and overall utility.
\item \textbf{Tighter Composition:} Our primary motivation for considering (approximate) RDP is due to its composition property. Another technique for tight privacy accouting which is widely used is through the so-called privacy loss distributions (PLDs)~\cite{meiser2018tight,sommer2019privacy,koskela2020computing,gopi21numerical,ghazi22faster,doroshenko22connect}. Using PLD requires tracking the \emph{dominating pairs} (aka ``worst case'') of neighboring output distributions~\cite{zhu22optimal}. For our mechanisms, there do not seem to be a tight dominating pair. Thus, it is an interesting direction to design a partition selection mechanism that is more amenable to PLD-based privacy accounting.
\end{itemize}

\bibliographystyle{alpha}
\bibliography{ref}

@inproceedings{renyi61,
    author = {Alfr{\'e}d R{\'e}nyi},
    title = {On measures of entropy and information},
    booktitle = {Proceedings of the Fourth Berkeley Symposium on Mathematical
  Statistics and Probability, Volume 1: Contributions to the Theory of
  Statistics},
    year = {1961}
}

@inproceedings{Mironov17,
  author       = {Ilya Mironov},
  title        = {R{\'{e}}nyi Differential Privacy},
  booktitle    = {CSF},
  pages        = {263--275},
  year         = {2017},
  url          = {https://doi.org/10.1109/CSF.2017.11},
  doi          = {10.1109/CSF.2017.11},
  bibsource    = {dblp computer science bibliography, https://dblp.org}
}

@inproceedings{dwork-calibrating,
author = {Dwork, Cynthia and McSherry, Frank and Nissim, Kobbi and Smith, Adam},
title = {Calibrating noise to sensitivity in private data analysis},
year = {2006},
isbn = {3540327312},
url = {https://doi.org/10.1007/11681878_14},
doi = {10.1007/11681878_14},
booktitle = {TCC},
pages = {265–284},
numpages = {20},
}

@article{desfontaines-partition,
  title = {Differentially private partition selection},
  volume = {2022},
  ISSN = {2299-0984},
  url = {http://dx.doi.org/10.2478/popets-2022-0017},
  DOI = {10.2478/popets-2022-0017},
  number = {1},
  journal = {PoPETS},
  author = {Desfontaines,  Damien and Voss,  James and Gipson,  Bryant and Mandayam,  Chinmoy},
  year = {2021},
  month = nov,
  pages = {339–352}
}

@article{swanberg2023dp,
  title={DP-SIPS: A simpler, more scalable mechanism for differentially private partition selection},
  author={Swanberg, Marika and Desfontaines, Damien and Haney, Samuel},
  journal={PoPETS},
  year={2023}
}

@inproceedings{
chen2025scalable,
title={Scalable Private Partition Selection via Adaptive Weighting},
author={Justin Y. Chen and Vincent Cohen-Addad and Alessandro Epasto and Morteza Zadimoghaddam},
booktitle={ICML},
year={2025},
url={https://openreview.net/forum?id=NEfIaE4BI5}
}

@article{gilani2025optimal,
  title={Optimal Additive Noise Mechanisms for Differential Privacy},
  author={Gilani, Atefeh and Gomez, Juan Felipe and Asoodeh, Shahab and Calmon, Flavio P and Kosut, Oliver and Sankar, Lalitha},
  journal={arXiv preprint arXiv:2504.14730},
  year={2025}
}

@inproceedings{dworkcalibrating,
author = {Dwork, Cynthia and McSherry, Frank and Nissim, Kobbi and Smith, Adam},
title = {Calibrating noise to sensitivity in private data analysis},
year = {2006},
isbn = {3540327312},
url = {https://doi.org/10.1007/11681878_14},
doi = {10.1007/11681878_14},
booktitle = {TCC},
pages = {265–284},
numpages = {20},
}

@article{plume2022,
  publtype={informal},
  author={Kareem Amin and Jennifer Gillenwater and Matthew Joseph and Alex Kulesza and Sergei Vassilvitskii},
  title={Plume: Differential Privacy at Scale},
  year={2022},
  cdate={1640995200000},
  journal={CoRR},
  volume={abs/2201.11603},
  url={https://arxiv.org/abs/2201.11603}
}

@article{wilson2020differentially,
  title={Differentially Private SQL with Bounded User Contribution},
  author={Wilson, Royce J and Zhang, Celia Yuxin and Lam, William and Desfontaines, Damien and Simmons-Marengo, Daniel and Gipson, Bryant},
  journal={PoPETS},
  volume={2},
  pages={230--250},
  year={2020}
}

@InProceedings{gopi20a,
  title = 	 {Differentially Private Set Union},
  author =       {Gopi, Sivakanth and Gulhane, Pankaj and Kulkarni, Janardhan and Shen, Judy Hanwen and Shokouhi, Milad and Yekhanin, Sergey},
  booktitle = 	 {ICML},
  pages = 	 {3627--3636},
  year = 	 {2020},
  url = 	 {https://proceedings.mlr.press/v119/gopi20a.html},
}

@inproceedings{Korolova-search-queries,
author = {Korolova, Aleksandra and Kenthapadi, Krishnaram and Mishra, Nina and Ntoulas, Alexandros},
title = {Releasing search queries and clicks privately},
year = {2009},
isbn = {9781605584874},
url = {https://doi.org/10.1145/1526709.1526733},
doi = {10.1145/1526709.1526733},
booktitle = {WWW},
pages = {171–180},
}

@inproceedings{
papernot2022hyperparameter,
title={Hyperparameter Tuning with Renyi Differential Privacy},
author={Nicolas Papernot and Thomas Steinke},
booktitle={ICLR},
year={2022},
url={https://openreview.net/forum?id=-70L8lpp9DF}
}

@article{van2014renyi,
  title={R{\'e}nyi divergence and Kullback-Leibler divergence},
  author={Van Erven, Tim and Harremos, Peter},
  journal={IEEE Transactions on Information Theory},
  volume={60},
  number={7},
  pages={3797--3820},
  year={2014},
  publisher={IEEE}
}

@misc{DPLib,
  author = {Differential Privacy Team, Google},
  title = {Privacy Loss Distributions},
  year         = {2026},
  url = {https://github.com/google/differential-privacy/blob/main/},
  urldate = {2026-01-24}
}

@article{dwork-roth,
author = {Dwork, Cynthia and Roth, Aaron},
title = {The Algorithmic Foundations of Differential Privacy},
year = {2014},
issue_date = {Aug 2014},
publisher = {Now Publishers Inc.},
address = {Hanover, MA, USA},
volume = {9},
number = {3–4},
issn = {1551-305X},
url = {https://doi.org/10.1561/0400000042},
doi = {10.1561/0400000042},
journal = {Found. Trends Theor. Comput. Sci.},
month = aug,
pages = {211–407},
numpages = {197}
}

@article{canonne2020discrete,
  title={The discrete gaussian for differential privacy},
  author={Canonne, Cl{\'e}ment L and Kamath, Gautam and Steinke, Thomas},
  journal={NeurIPS},
  volume={33},
  pages={15676--15688},
  year={2020}
}

@inproceedings{Carvalho2022IncorporatingIF,
  title={Incorporating Item Frequency for Differentially Private Set Union},
  author={Ricardo Silva Carvalho and Ke Wang and Lovedeep Gondara},
  booktitle={AAAI},
  year={2022},
  url={https://api.semanticscholar.org/CorpusID:250289456}
}

@inproceedings{BGGKMRS22,
  author       = {James Bell and
                  Adri{\`{a}} Gasc{\'{o}}n and
                  Badih Ghazi and
                  Ravi Kumar and
                  Pasin Manurangsi and
                  Mariana Raykova and
                  Phillipp Schoppmann},
  title        = {Distributed, Private, Sparse Histograms in the Two-Server Model},
  booktitle    = {CCS},
  pages        = {307--321},
  year         = {2022},
  url          = {https://doi.org/10.1145/3548606.3559383},
  doi          = {10.1145/3548606.3559383},
  bibsource    = {dblp computer science bibliography, https://dblp.org}
}

@inproceedings{zhu22optimal,
  author       = {Yuqing Zhu and
                  Jinshuo Dong and
                  Yu{-}Xiang Wang},
  title        = {Optimal Accounting of Differential Privacy via Characteristic Function},
  booktitle    = {AISTATS},
  pages        = {4782--4817},
  year         = {2022},
}

@inproceedings{meiser2018tight,
  title={Tight on budget? {T}ight bounds for $r$-fold approximate differential privacy},
  author={Meiser, Sebastian and Mohammadi, Esfandiar},
  booktitle={CCS},
  pages={247--264},
  year={2018}
}

@article{sommer2019privacy,
  author       = {David M. Sommer and
                  Sebastian Meiser and
                  Esfandiar Mohammadi},
  title        = {Privacy Loss Classes: The Central Limit Theorem in Differential Privacy},
  journal      = {PoPETS},
  volume       = {2019},
  number       = {2},
  pages        = {245--269},
  year         = {2019},
}

@inproceedings{koskela2020computing,
  title={Computing tight differential privacy guarantees using {FFT}},
  author={Koskela, Antti and J{\"a}lk{\"o}, Joonas and Honkela, Antti},
  booktitle={AISTATS},
  pages={2560--2569},
  year={2020},
}

@inproceedings{gopi21numerical,
  author    = {Sivakanth Gopi and
               Yin Tat Lee and
               Lukas Wutschitz},
  title     = {Numerical Composition of Differential Privacy},
  booktitle = {NeurIPS},
  pages     = {11631--11642},
  year      = {2021},
}

@article{doroshenko22connect,
  author       = {Vadym Doroshenko and
                  Badih Ghazi and
                  Pritish Kamath and
                  Ravi Kumar and
                  Pasin Manurangsi},
  title        = {Connect the Dots: Tighter Discrete Approximations of Privacy Loss
                  Distributions},
  journal      = {PoPETS},
  volume       = {2022},
  number       = {4},
  pages        = {552--570},
  year         = {2022},
}

@inproceedings{ghazi22faster,
  author    = {Badih Ghazi and
               Pritish Kamath and
               Ravi Kumar and
               Pasin Manurangsi},
  title     = {Faster Privacy Accounting via Evolving Discretization},
  booktitle = {ICML},
  pages     = {7470--7483},
  year      = {2022},
}

@misc{twitter,
	title={Customer Support on Twitter},
	url={https://www.kaggle.com/dsv/8841},
	DOI={10.34740/KAGGLE/DSV/8841},
	publisher={Kaggle},
	author={Stuart Axelbrooke},
	year={2017}
}

@misc{finance,
	title={Daily Financial News for 6000+ Stocks},
	url={https://www.kaggle.com/miguelaenlle/datasets},
	publisher={Kaggle},
	author={Miguel Aenlle}
}

@misc{wiki,
	title={Simple/Normal Wikipedia Abstracts V1},
	url={https://www.kaggle.com/datasets/markwijkhuizen/simplenormal-wikipedia-abstracts-v1},
	publisher={Kaggle},
	author={Mark Wijkhuizen}
}

@inproceedings{amazon2,
 author = {Zhang, Xiang and Zhao, Junbo and LeCun, Yann},
 booktitle = {Advances in Neural Information Processing Systems},
 editor = {C. Cortes and N. Lawrence and D. Lee and M. Sugiyama and R. Garnett},
 pages = {},
 publisher = {Curran Associates, Inc.},
 title = {Character-level Convolutional Networks for Text Classification},
 url = {https://proceedings.neurips.cc/paper_files/paper/2015/file/250cf8b51c773f3f8dc8b4be867a9a02-Paper.pdf},
 volume = {28},
 year = {2015}
}

@article{amazon1,
  title={Hidden factors and hidden topics: understanding rating dimensions with review text},
  author={Julian McAuley and Jure Leskovec},
  journal={Proceedings of the 7th ACM conference on Recommender systems},
  year={2013},
  url={https://api.semanticscholar.org/CorpusID:6440341}
}

@InProceedings{imdb,
  author    = {Maas, Andrew L.  and  Daly, Raymond E.  and  Pham, Peter T.  and  Huang, Dan  and  Ng, Andrew Y.  and  Potts, Christopher},
  title     = {Learning Word Vectors for Sentiment Analysis},
  booktitle = {Proceedings of the 49th Annual Meeting of the Association for Computational Linguistics: Human Language Technologies},
  month     = {June},
  year      = {2011},
  address   = {Portland, Oregon, USA},
  publisher = {Association for Computational Linguistics},
  pages     = {142--150},
  url       = {http://www.aclweb.org/anthology/P11-1015}
}

\appendix

\section{Water-Filling Algorithm}
\label{app:waterfilling}

We give more detail of the ``water-filling'' algorithm from \Cref{sec:additive-rdp}. Throughout this section, we assume that the total variation distance of $A, B$ is greater than $\delta$ (i.e. $\sum_x \min\{A(x), B(x)\} < 1 - \delta$). Otherwise, the approximate R{\'e}nyi divergence is simply zero.

\subsection{$O(n \log n)$ Implementation of the Algorithm}

\noindent To implement the algorithm efficiently, we observe that computing the cutoffs $\lambda_P$ and $\lambda_Q$ requires finding a threshold that truncates exactly $\delta$ probability mass from the tail of a likelihood ratio distribution. We define a generic subprocedure $\textsc{ComputeCutoff}$ to compute this threshold. This subprocedure takes as input probability distributions $A, B$ over a discrete domain $\mathcal{X}$ of size $n$, and a target mass $\delta \in (0,1)$. It then outputs a ratio $\lambda > 1$ such that $\sum_x \min\{A(x), \lambda \cdot B(x)\} = 1 - \delta$.

The subprocedure $\textsc{ComputeCutoff}(A, B, \delta)$ works as follows:
\begin{enumerate}
    \item \textbf{Sort Ratios:} Compute the likelihood ratios $r(x) = \frac{A(x)}{B(x)}$ for all $x \in \mathcal{X}$. Sort the domain to $x_1, x_2, \dots, x_n$ such that $r(x_1) \ge r(x_2) \ge \dots \ge r(x_n)$. Define an auxiliary bound $r(x_{n+1}) = 0$.
    
    \item \textbf{Initialize:} Set cumulative masses $S_A = 0$ and $S_B = 0$.
    
    \item \textbf{Linear Scan:} For $k = 1, 2, \dots, n$:
    \begin{enumerate}
        \item Update prefix sums: $S_A \gets S_A + A(x_k)$ and $S_B \gets S_B + B(x_k)$.
        \item Calculate the hypothetical mass removed if the cutoff were exactly $r(x_{k+1})$:
        \begin{equation*}
            \delta_k = S_A - r(x_{k+1}) S_B
        \end{equation*}
        \item \textbf{Termination:} If $\delta_k \ge \delta$, the exact cutoff $\lambda$ lies within $[r(x_{k+1}), r(x_k)]$. Because the removed mass is strictly linear with respect to the cutoff in this interval, we simply return:
        \begin{equation*}
            \lambda = \frac{S_A - \delta}{S_B}
        \end{equation*}
    \end{enumerate}
\end{enumerate}

\paragraph{Main Algorithm.}
The cutoffs in our algorithm can then be computed via two independent calls to the subprocedure: \begin{align*}
    \lambda_P = \textsc{ComputeCutoff}(P, Q, \delta), & &\lambda_Q = \frac{1}{\textsc{ComputeCutoff}(Q, \tP, \delta)}.
\end{align*}
This results in $O(n \log n)$-time algorithm as claimed.

\subsection{Proof of Correctness}

\noindent \textbf{Problem Formulation.} 
Recall that we define $\tP(x) = (1 - \delta)P'(x)$ and $\tQ(x) = (1 - \delta)Q'(x)$.
The equivalent minimization problem is:
\begin{align*}
    \text{minimize } & \sum_{x} \tilde{P}(x)^\alpha \tilde{Q}(x)^{1-\alpha} \\
    \text{subject to } & \sum_{x} \tilde{P}(x) = 1-\delta, \quad \sum_{x} \tilde{Q}(x) = 1-\delta, \\
                       & \tilde{P}(x) - P(x) \le 0, \quad \tilde{Q}(x) - Q(x) \le 0, \\
                       & -\tilde{P}(x) \le 0, \quad -\tilde{Q}(x) \le 0.
\end{align*}
Note that, for $\alpha > 1$, the objective function $f(\tilde{P}, \tilde{Q}) = \sum_x \tilde{P}(x)^\alpha \tilde{Q}(x)^{1-\alpha}$ is strictly convex.

\paragraph{Lagrangian.}
Let $\nu_P, \nu_Q \in \mathbb{R}$ be the multipliers for the equality constraints, and $\mu_P(x), \mu_Q(x) \in \mathbb{R}$ be the multipliers for the inequality constraints. The Lagrangian is\footnote{We only consider solutions which are interior points for the lower bounds (non-negativity strictly satisfied) and we thus omit their multipliers. Note that this is without loss of generality since our algorithm's solution satisfies this for all $x$ such that $P(x), Q(x) > 0$. (For all other $x$'s, we are forced to set $\tilde{P}(x) = \tilde{Q}(x) = 0$ anyway.)}:
\begin{align*}
    \mathcal{L}(\tilde{P}, \tilde{Q}, \nu_P, \nu_Q, \mu_P, \mu_Q) 
    &= \sum_{x} \tilde{P}(x)^\alpha \tilde{Q}(x)^{1-\alpha} \\
    &\quad - \nu_P \left( \sum_{x} \tilde{P}(x) - (1-\delta) \right) - \nu_Q \left( \sum_{x} \tilde{Q}(x) - (1-\delta) \right) \\
    &\quad + \sum_{x} \mu_P(x) (\tilde{P}(x) - P(x)) + \sum_{x} \mu_Q(x) (\tilde{Q}(x) - Q(x)).
\end{align*}

\paragraph{Output Solution and Lagrangian Multipliers.}
Let $\tilde{P}$ and $\tilde{Q}$ be the solutions produced by the water-filling algorithm. 
We note that $\lambda_P \geq 1 \geq \lambda_Q$. The former inequality follows from our assumption that $\sum_x \min\{P(x), Q(x)\} < 1 - \delta$, while the latter follows from $1 - \delta = \sum_x \tQ(x) \leq \sum_x \tP(x) / \lambda_Q = (1 - \delta) / \lambda_Q$.

Let $R(x) = \frac{\tilde{P}(x)}{\tilde{Q}(x)}$.
The sample space can be partitioned into three regions based on the ratio $\frac{P(x)}{Q(x)}$: 
\begin{itemize}
    \item \textbf{Region 1} ($P(x)/Q(x) > \lambda_P$): We have $\tP(x) = \lambda_P Q(x)$, $\tQ(x) = Q(x) \implies R(x) = \lambda_P$.
    \item \textbf{Region 2} ($\lambda_Q \le P(x)/Q(x) \le \lambda_P$): We have $\tP(x) = P(x)$, $\tQ(x) = Q(x) \implies R(x) = \frac{P(x)}{Q(x)}$.
    \item \textbf{Region 3} ($P(x)/Q(x) < \lambda_Q$): We have $\tP(x) = P(x)$, $\tQ(x) = \frac{P(x)}{\lambda_Q} \implies R(x) = \lambda_Q$.
\end{itemize}
We explicitly define the Lagrangian multipliers for this proposed solution as follows:
\begin{align*}
    \nu_P &= \alpha \lambda_P^{\alpha-1} \\
    \nu_Q &= (1-\alpha) \lambda_Q^\alpha \\
    \mu_P(x) &= \nu_P - \alpha R(x)^{\alpha-1} = \alpha \left( \lambda_P^{\alpha-1} - R(x)^{\alpha-1} \right) \\
    \mu_Q(x) &= \nu_Q - (1-\alpha) R(x)^\alpha = (\alpha - 1) \left( R(x)^\alpha - \lambda_Q^\alpha \right)
\end{align*}

\paragraph{Verification of KKT Conditions.}

Since the problem is strictly convex, the KKT conditions are sufficient for global optimality. We verify that our proposed solution and multipliers satisfy all four KKT conditions:

\begin{enumerate}
    \item \textbf{Stationarity:}
    By taking the gradient of $\mathcal{L}$ and substituting our defined multipliers, stationarity holds by construction for all $x$:
    \begin{align*}
        \frac{\partial \mathcal{L}}{\partial \tilde{P}(x)} &= \alpha R(x)^{\alpha-1} - \nu_P + \mu_P(x) = 0 \\
        \frac{\partial \mathcal{L}}{\partial \tilde{Q}(x)} &= (1-\alpha) R(x)^\alpha - \nu_Q + \mu_Q(x) = 0
    \end{align*}

    \item \textbf{Primal Feasibility:}
    By the definition of the algorithm, $0 \le \tilde{P}(x) \le P(x)$ and $0 \le \tilde{Q}(x) \le Q(x)$. The chosen cutoffs $\lambda_P, \lambda_Q$ explicitly guarantee $\sum \tilde{P}(x) = 1-\delta$ and $\sum \tilde{Q}(x) = 1-\delta$.

    \item \textbf{Dual Feasibility:}
    As discussed above, $R(x) \in [\lambda_P, \lambda_Q]$. Since $\alpha > 1$, we thus have $\mu_P(x) = \alpha(\lambda_P^{\alpha-1} - R(x)^{\alpha-1}) \ge 0$ and $\mu_Q(x) = (\alpha-1)(R(x)^\alpha - \lambda_Q^\alpha) \ge 0$.
    \item \textbf{Complementary Slackness:} 
    \begin{itemize}
        \item For $\tilde{P}$: If $\tilde{P}(x) < P(x)$, then $x$ must be in Region 1, meaning $R(x) = \lambda_P$. Substituting this yields $\mu_P(x) = \alpha(\lambda_P^{\alpha-1} - R(x)^{\alpha-1}) = 0$. Thus, $\mu_P(x) (\tilde{P}(x) - P(x)) = 0$ for all $x$.
        \item For $\tilde{Q}$: If $\tilde{Q}(x) < Q(x)$, then $x$ must be in Region 3, meaning $R(x) = \lambda_Q$. Substituting this yields $\mu_Q(x) = (\alpha-1)(\lambda_Q^\alpha - \lambda_Q^\alpha) = 0$. Thus, $\mu_Q(x) (\tilde{Q}(x) - Q(x)) = 0$ for all $x$.
    \end{itemize}
\end{enumerate}

The solution generated by the sequential water-filling algorithm, coupled with the defined multipliers, satisfies  the KKT sufficiency conditions. As a result, the solution is optimal. $\hfill \qedsymbol$

\end{document}